\newtheorem{theorem}{Theorem}[section]
\newtheorem{lemma}[theorem]{Lemma}
\theoremstyle{definition}
\newtheorem{definition}[theorem]{Definition}
\newtheorem{example}[theorem]{Example}
\theoremstyle{remark}
\newtheorem{remark}[theorem]{Remark}
\numberwithin{equation}{section}
\begin{document}

\title[A variant of van Hoeij's algorithm to compute hypergeometric terms]{A variant of van Hoeij's algorithm to compute hypergeometric term solutions of holonomic recurrence equations}

\author{Bertrand Teguia Tabuguia}

\address{Department of Mathematics, University of Kassel, Heinrich-Plett-Str.40, 34132 Kassel, Germany}

\email{bteguia@mathematik.uni-kassel.de}

\thanks{This work gather results from chapters $5$ and $6$ of author's Ph.D. thesis, \cite{BTphd}. The author is grateful to his advisor Wolfram Koepf for invaluable guidance.}

\subjclass[2020]{Primary 33F10, 39A06; Secondary 33C20, 68W30}

\keywords{Holonomic recurrence equations, hypergeometric terms, van Hoeij's algorithm, Petkov{\v{s}}ek's algorithm}

\begin{abstract}
		Linear homogeneous recurrence equations with polynomial coefficients are said to be holonomic. Such equations have been introduced in the last century for proving and discovering combinatorial and hypergeometric identities. Given a field $\mathbb{K}$ of characteristic zero, a term $a_n$ is called hypergeometric with respect to $\mathbb{K}$, if the ratio $a_{n+1}/a_n$ is a rational function over $\mathbb{K}$. The solutions space of holonomic recurrence equations gained more interest in the 1990s from the well known Zeilberger's algorithm. In particular, algorithms computing the subspace of hypergeometric term solutions which covers polynomial, rational, and some algebraic solutions of these equations were investigated by Marko Petkov\v{s}ek (1993) and Mark van Hoeij (1999). The algorithm proposed by the latter is characterized by a much better efficiency than that of the other; it computes, in Gamma representations, a basis of the subspace of hypergeometric term solutions of any given holonomic recurrence equation, and is considered as the current state of the art in this area. Mark van Hoeij implemented his algorithm in the Computer Algebra System (CAS) Maple through the command \textit{LREtools[hypergeomsols]}.
		
		We propose a variant of van Hoeij's algorithm that performs the same efficiency and gives outputs in terms of factorials and shifted factorials, without considering certain recommendations of the original version. We have implementations of our algorithm for the CASs Maxima and Maple. Such an implementation is new for Maxima which is therefore used for general-purpose examples. Our Maxima code is currently available as a third-party package for Maxima. A comparison between van Hoeij's implementation and ours is presented for Maple 2020. It appears that both have the same efficiency, and moreover, for some particular cases, our code finds results where \textit{LREtools[hypergeomsols]} fails.
\end{abstract}

\maketitle

\section{Introduction}
Let $\mathbb{K}$ be a field of characteristic zero. $\mathbb{K}$ is mostly a finite extension field of the rationals. A hypergeometric term can always be written in the form
\begin{equation}
	C^n\cdot R(n) \cdot h(n), \label{eq1}
\end{equation}
where $C\in\mathbb{K}$, $R(n)\in\mathbb{K}(n)$, and $h(n)$ is a hypergeometric term expressed in terms of factorials and shifted factorials (Pochhammer symbols\footnote{For a given constant $p$, the Pochhammer symbol $(p)_n$ is $1$ if $n=0$ and $p\cdot (p+1)\cdots (p+n-1)$ if $n$ is a positive integer.}) such that $h(n+1)/h(n)\in\mathbb{K}(n)$ is monic (see \cite{cluzeau2006computing}, \cite[Chapter 6]{BTphd}). Notice that the representation $(\ref{eq1})$ is unique if we choose to write Pochhammer terms as $(p)_n$ with the real part of $p$ in a fixed half-open real interval of unit amplitude. This rewriting creates a multiplicative rational function factor that is taken into account when computing $R(n)$. We will consider the interval $\mathcal{I}=(0,1]$ in our algorithm, and say that Pochhammer parts, corresponding to $h(n)$, is taken modulo the integers ($\mathbb{Z}$) with respect to $\mathcal{I}$.

\begin{example} 
$3^nn!$ and $7^n\frac{n^2+1}{n+2}\frac{(1/3)_n}{(3/4)_n}$ are two hypergeometric terms of the form $(\ref{eq1})$.
\end{example}

A homogeneous linear recurrence equation with rational function coefficients is equivalent to a linear recurrence equation with polynomial coefficients, considering multiplication of the equation by the corresponding rational coefficients common denominator. Therefore the algorithms we are dealing with easily adapt to such cases. However, for this paper we will consider recurrence equations of the form
\begin{equation}
	\sum_{i=0}^{d} P_i(n)\cdot a_{n+i} = 0, d\in\mathbb{N}, \label{eq2}
\end{equation}
for the indeterminate sequence $a_n$, and the coefficients $P_i(n)\in\mathbb{K}[n], i=0,\ldots,d$. Two main algorithms were proposed to find all its hypergeometric term solutions.

Petkov{\v{s}}ek's approach presented in \cite{petkovvsek1992hypergeometric}, focuses on the computation of ratios\footnote{$a_{n+1}/a_n$ for a hypergeometric term $a_n$.} of hypergeometric term solutions of $(\ref{eq2})$ and look for formulas afterward. However, his algorithm has an exponential worst-case complexity on the degree of $P_0$ and $P_d$. Thus this approach could not be considered as conclusive for such computations. The commands \textit{solve\_rec} of the CAS Maxima and originally \textit{RSolve} of Mathematica implement Petkov{\v{s}}ek's algorithm.

van Hoeij's approach first described in \cite{van1999finite} is much more efficient, and moreover, finds a basis of the subspace of hypergeometric term solutions of $(\ref{eq2})$. Note that the final output in Petkov{\v{s}}ek's algorithm is not necessarily a basis. Therefore one could figure out the gain of efficiency by observing that many computational cases in Petkov{\v{s}}ek's approach are reduced to one case in van Hoeij's approach. Computational details of van Hoeij's algorithm were more explained and complemented in \cite{cluzeau2006computing}. Another important point to notice in this algorithm is how unnecessary splitting fields that increase the running time during computations are avoided. 
\begin{definition}(see \cite[Definition 9]{van1999finite}, \cite[Definition 8]{cluzeau2006computing})
	A point $p + \mathbb{Z}, p\in\mathbb{K}$ is called finite singularity of $(\ref{eq1})$ if there exists $\tau\in\mathbb{Z}$ such that $p+\tau$ is a root of $P_d(n-d)\cdot P_0(n)$.
\end{definition}
From this definition where invariance modulo the integers is clearly put forward, one can see the connection between finite singularities and our used rewriting of hypergeometric term Pochhammer parts.

The power of van Hoeij's algorithm comes from the following main concepts:
\begin{enumerate}
	\item local types\footnote{This notion was introduced to study the local properties of difference operators at infinity (see \cite{Barkatou, Duval})} at infinity of hypergeometric term solutions of $(\ref{eq1})$,
	\item local types or valuation growths of hypergeometric term solutions at finite singularities of $(\ref{eq1})$.
\end{enumerate}
The computations of (1) and (2) constitute the key steps of van Hoeij's algorithm and that is where our approach proceeds differently.
\begin{itemize}
	\item For (1), van Hoeij's algorithm uses the Newton polygon algorithm whereas we use a method based on asymptotic expansion inspired by Petkov{\v{s}}ek's algorithm Poly (see \cite{petkovvsek1992hypergeometric}).
	\item Computing (2) is inherent in van Hoeij's algorithm, but in our approach this is automatically considered in the way we construct $h(n)$ in $(\ref{eq1})$ by taking monic factors modulo the integers of $P_0$ and $P_d(n-d)$. This consideration is valid thanks to Petkov{\v{s}}ek's approach.
\end{itemize}
Apart from these essential differences, it is not trivial to notice that both algorithms do the same thing, because their step orderings do not coincide either.

Our first motivation in implementing van Hoeij's algorithm is to use the outputs for power series representations as described in \cite[Section 10.26]{koepf2006computeralgebra}. Indeed, one can represent some holonomic functions as linear combinations of hypergeometric series whose coefficients are hypergeometric term solutions of underlying holonomic recurrence equations. For this purpose, we want outputs ready for evaluation for non-negative integers. This is settled in our approach by considering Pochhammer parts modulo the integers with respect to $\mathcal{I}$. Contrary to van Hoeij's  Maple implementation \textit{LREtools[hypergeomsols]}, this choice is independent of the recurrence equation considered. \textit{LREtools[hypergeomsols]} uses Gamma representations for $h(n)$, and similarly as Pochhammer symbols, evaluation is not defined for negative integers. Since such normalization is not taken into account in \textit{LREtools[hypergeomsols]}, for use in power series computations or combinatory (see \cite{WolfBook}), one may have to shift the initialization which might even be different for each hypergeometric term appearing in the given output. In other cases it may lead to inconvenient results with evaluation like $\Gamma(1/2)=\sqrt{\pi}$. As we mentioned earlier, we want $h(n)$ in terms of factorials and Pochhammer symbols modulo the integers with respect to $\mathcal{I}$; we will say that such a formula is "simple". 

Let us give an example to illustrate all these.

\begin{example}
Consider the following holonomic recurrence equation
\begin{multline}
\displaystyle \,81\,{n}^{3} \big( -2+n \big)  \big( 2592\,{n}^{15}+56592\,{n}^{14}+566784\,{n}^{13}+3438888\,{n}^{12}+14040866\\
\mbox{}\,{n}^{11}+40413165\,{n}^{10}+83014167\,{n}^{9}+118689722\,{n}^{8}+105269208\,{n}^{7}+24761376\,{n}^{6}\\
\mbox{}-78424336\,{n}^{5}-131026944\,{n}^{4}-108917280\,{n}^{3}-54383616\,{n}^{2}-15593472\,n-1990656 \big)\\
\mbox{}\big( -1+n \big) ^{3} \big( 1+2\,n \big) ^{5} a_n - \big( -1+n \big)  \big( 6718464\,{n}^{24}+165722112\,{n}^{23}+1895913216\,{n}^{22}\\
\mbox{}+13287379968\,{n}^{21}+63281637504\,{n}^{20}+213327813888\,{n}^{19}+505402785504\,{n}^{18}\\
\mbox{}+757111794432\,{n}^{17}+271146179476\,{n}^{16}-2121306037512\,{n}^{15}-7223796390373\,{n}^{14}\\
\mbox{}-14217526943124\,{n}^{13}-20381899157262\,{n}^{12}-22697247078996\,{n}^{11}\\
\mbox{}-20140632084597\,{n}^{10}-14388789455784\,{n}^{9}-8294073141060\,{n}^{8}-3843447511168\,{n}^{7}\\
\mbox{}-1418994576624\,{n}^{6}-411122122112\,{n}^{5}-91298680512\,{n}^{4}-14978958336\,{n}^{3}\\
\mbox{}-1708259328\,{n}^{2}-120766464\,n-3981312 \big)  \big( n+1 \big) ^{3} a_{n+1} +32\, \big( 2592\,{n}^{15}\\
\mbox{}+17712\,{n}^{14}+46656\,{n}^{13}+41208\,{n}^{12}-78046\,{n}^{11}-305161\,{n}^{10}\\
\mbox{}-498877\,{n}^{9}-523438\,{n}^{8}-374752\,{n}^{7}-212350\,{n}^{6}-77798\,{n}^{5}\\
\mbox{}-23024\,{n}^{4}-4682\,{n}^{3}-641\,{n}^{2}-53\,n-2 \big)  \big( n+2 \big) ^{3} \big( 3\,n+4 \big) ^{4} a_{n+2} =0
\label{eq3}
\end{multline}
Our Maple and Maxima implementation finds the following output with CPU times $0.110$ and $0.297$ second respectively.
\begin{equation}
	\left\{\frac{{{n\operatorname{!}}^{3}}}{{{{{\left( \frac{1}{3}\right) }_n}}^{4}}\, {{\left( n-1\right) }^{3}}\, {{n}^{6}}}\operatorname{,}\frac{n\, {{\left( 2 n\right) \operatorname{!}}^{5}}}{\left( n-2\right) \, \left( n-1\right) \, {{4}^{5 n}}\, {{n\operatorname{!}}^{4}}}\right\}\label{eq4}.
\end{equation}
Observe that evaluation can be made for non-negative integers. As one can observe, further simplification are made in our implementation to transform Pochhammer symbols into factorials. 

Maple 2019 \textit{LREtools[hypergeomsols]} finds
\begin{equation}
\left[\Gamma \left( n-2 \right)  \left( \Gamma \left( n+1/2 \right)  \right) ^{5}{n}^{2},
{\frac { \left( \Gamma \left( n-1 \right)  \right) ^{2}\Gamma \left( n-2 \right)  \left( n-2 \right) }{ \left( \Gamma \left( n+1/3 \right)  \right) ^{4}{n}^{3}}}\right]\label{eq5},
\end{equation}
with CPU time $0.265$ second. In this output the Gamma terms $\Gamma\left( n-2 \right)$ and $\Gamma\left( n-1 \right)$ cannot be evaluated at $0$. Moreover their arguments differs by $1$, which shows that shifts most be considered before initialization. Note, however, that this remark is only for further use of hypergeometric terms, one can easily show that $(\ref{eq4})$ and $(\ref{eq5})$ are equivalent.

The Maxima (version 5.44) command \textit{solve\_rec} finds

\begin{equation}
{a_n}=\frac{{{\operatorname{\Gamma }\left( \frac{1}{3}\right) }^{4}}\, {{\mathit{\% k}}_1}\, {{\left( n-2\right) \operatorname{!}}^{3}}\, {{3}^{-4 n-8}}\, {{81}^{n}}}{{{n}^{3}}\, {{\operatorname{\Gamma }\left( \frac{3 n+1}{3}\right) }^{4}}}+\frac{{{\mathit{\% k}}_2}\, \left( n-3\right) \operatorname{!} {{n}^{2}}\, {{2}^{5 n+15}}\, {{\operatorname{\Gamma }\left( \frac{2 n+1}{2}\right) }^{5}}}{{{\ensuremath{\pi} }^{\frac{5}{2}}}\, {{32}^{n}}}\mbox{}, \label{eq6}
\end{equation}
with CPU time $70.266$ seconds. As one may expect, finding formulas from the obtained ratios of hypergeometric terms with Petkov\v{s}ek's algorithm is more complicated. Nevertheless, the large computation time here is due to the degrees of the leading and trailing polynomial coefficients of $(\ref{eq3})$.
\end{example}

This paper goes as follows. In the next section we derive an algorithm to compute holonomic recurrence equations satisfied by a list of linearly independent hypergeometric terms. This algorithm is useful to generate examples and observe some properties of hypergeometric terms by doing forward and backward computations. This can also be done using the Maple package \textit{gfun} (see \cite{gfun}), but note that this code is limited to two recurrence equations as input and the strategy used is slightly different.

In Section \ref{sec2} we give more details on how we normalize the Pochhammer parts of hypergeometric terms. This will help to make the description of the main algorithm of this paper self-contained.

Section \ref{sec3} describes our variant of van Hoeij's algorithm which efficiently computes a basis of the subspace hypergeometric term solutions of $(\ref{eq1})$, using the representation $(\ref{eq1})$. The paper ends with some comparisons with existing implementations.

\section{From hypergeometric terms to holonomic recurrence equations}

It is well known that linear combinations of holonomic functions are holonomic (see \cite[Section 10.9, Section 10.16]{koepf2006computeralgebra}, \cite{stanley1980differentiably}). Since hypergeometric terms are holonomic, there exist algorithms to compute a holonomic recurrence equation of least order satisfied by a given linear combination\footnote{Note that this generally reduces to a sum of hypergeometric terms. Therefore the main information here is that the given hypergeometric terms are distinct.} of hypergeometric terms. Throughout this section we assume there exists an algorithm for finding the rational function defined by the ratio of a hypergeometric term (see \cite[Algorithm 2.2]{WolfBook}). The algorithm of this section is a generalization of the case of two given linearly independent hypergeometric terms. Thus, we treat this particular case and by simple analogy we give the general approach for a given list of linearly independent hypergeometric terms.

\subsection{Case of two linearly independent hypergeometric terms}

Let $a_n$ and $b_n$ be two linearly independent hypergeometric terms over $\mathbb{K}$ such that
\begin{equation}
	a_{n+1} = r_1(n) a_{n}~~ \text{and} ~~ b_{n+1} = r_2(n) b_{n}, \label{eq7}
\end{equation}
where $r_1$ and $r_2$ are rational functions in $\mathbb{K}(n)$. As we consider two terms, the order of the recurrence equation sought is $2$, so we are looking for a recurrence equation of the form
\begin{equation}
	P_{2}(n) s_{n+2} + P_{1}(n) s_{n+1} + P_0(n) s_n =0, \label{eq8}
\end{equation}
where $P_{0}, P_{1}, P_{2}$ are polynomials over $\mathbb{K}$, satisfied by $a_{n}$ and $b_{n}$. We must assume that $P_0\cdot P_{2}\neq 0$, otherwise the recurrence equation can be reduced to a first order recurrence relation. Thus finding $(\ref{eq8})$ is equivalent to searching for rational functions $R_2$ and $R_1$ such that
\begin{equation}
	R_2(n) s_{n+2} + R_1(n) s_{n+1} + s_n =0. \label{eq9}
\end{equation}
Using $(\ref{eq7})$, we have
\begin{equation}
	a_{n+2} = r_1(n+1) a_{n+1}~~ \text{and} ~~ b_{n+2} = r_2(n+1) b_{n+1}. \label{eq10}
\end{equation}
By substitution, $a_{n}$ and $b_{n}$ satisfy $(\ref{eq9})$ if and only if
\begin{equation}
	\begin{cases}
		r_1(n+1) R_2 + R_1 = -\frac{1}{r_1(n)}\\[4mm]  
		r_2(n+1) R_2 + R_1 = -\frac{1}{r_2(n)}
	\end{cases}, \label{eq11}
\end{equation}
which is a linear system of two equations with two unknowns in $\mathbb{K}(n)$. Furthermore, a solution exists and is unique since the determinant of the system
\begin{equation}
	r_a(n+1) - r_b(n+1) \neq 0 \label{eq12}
\end{equation}
by assumption. As a linear system of two equations, the exact solution is easy to compute, that is
\begin{eqnarray}
	R_1(n) &=& \frac{r_2(n+1)r_2(n)-r_1(n+1)r_1(n)}{r_1(n)r_2(n)(r_1(n+1)-r_2(n+1))},\\
	R_2(n) &=& \frac{r_1(n)-r_2(n)}{r_1(n)r_2(n)(r_1(n+1)-r_2(n+1))}. \label{eq13}
\end{eqnarray}

Finally, the holonomic recurrence equation sought is found by multiplying the equation $(\ref{eq9})$ by the common denominator of $R_1(n)$ and $R_2(n)$.

\subsection{General case}

Now we want to generalize the above approach for finitely many linearly independent hypergeometric terms. Let $ a_{n}^{[i]}, i=1,\ldots,d$ $(d\geqslant1)$ be $d$ given linearly independent hypergeometric terms over $\mathbb{K}$ such that
\begin{equation}
	a_{n+1}^{[i]} = r_i(n) a_{n}^{[i]},~i=1,\ldots,d, \label{eq14}
\end{equation}
for some rational functions $r_i$. The vector $\left(R_1(n),R_2(n),\ldots,R_{d}(n)\right)^T\in\mathbb{K}(n)^{d}$ of rational coefficients of the recurrence equation
\begin{equation}
	R_d(n) s_{n+d} + R_{d-1}(n) s_{n+d-1} + \ldots + R_1(n) s_{n+1} + s_n = 0 \label{eq15}
\end{equation}
satisfied by each hypergeometric term $a_{n}^{[i]}$, is the unique vector solution $v\in\mathbb{K}(n)^d$ of the matrix system
\begin{equation}
	\left[ \prod_{k=1}^{j-1}r_i(n+k)\right]_{i,j=1,\ldots,d} \cdot v = - \left(\frac{1}{r_i(n)}\right)^T_{i=1,\ldots,d}. \label{eq16}
\end{equation} 
In case there are linearly dependent hypergeometric terms, one can still use this process by replacing the arbitrary constants appearing in the solution of $(\ref{eq16})$ by zero. This is how we implemented this method. The steps of the algorithm can be summarized as follows.

\begin{algorithm}[h!]
	\caption{Compute the holonomic recurrence equation of least order for a given list $L$ of hypergeometric terms}\label{sumhyperRE}
	\begin{algorithmic}[2]
		\Require  A list $L:=[h_1,\ldots,h_d]$ of hypergeometric terms in the variable $n$ and a symbol $a$.
		\Ensure A holonomic recurrence equation in $a_n$ of least order satisfied by the elements in $L$.
		\begin{enumerate}
			\item Let $R:=[r_i(n),\ldots,r_d(n)]$ be the ratios of the elements in $L$.
			\item If there are some irrational functions in $R$ then stop and return FALSE. No holonomic recurrence equation can be found.
			\item Let 
			\[ M:=\left[ \prod_{k=1}^{j-1}r_i(n+k)\right]_{i,j=1,\ldots,d}.\]
			\item Let 
			\[b:=\left(\frac{1}{r_i(n)}\right)^T_{i=1,\ldots,d}.\]
			\item Let $V$ be the solution of the matrix system $M\cdot v = b$.
			\item If there are arbitrary constants in $V$ then substitute those by zero.
			\item Let $RE:=a_n + \sum_{i=1}^{d} V[i]\cdot a_{n+i}$, where $V[i]$ denotes the $i^{\text{th}}$ component in $V$.
			\item Multiply $RE$ by the common denominator of the components of $V$ and return the result with equality to 0, after factoring the coefficients.
		\end{enumerate}
	\end{algorithmic}
\end{algorithm}

\begin{example}
	We implemented this algorithm as \textit{sumhyperRE}. Let us consider Example 4.1 in \cite{petkovvsek1992hypergeometric} and make a backward computation to find the recurrence equation for 
	\[\frac{1}{(n+1)(n+2)}, ~\text{ and }\frac{(-1)^n(2n+3)}{(n+1)(n+2)}.\]
	Our Maxima code gets
	
	\noindent
	\begin{minipage}[t]{4.0em}\color{red}\bfseries
		(\% i1)
	\end{minipage}
	\begin{minipage}[t]{\textwidth}\color{blue}
		sumhyperRE([1/((n+1)*(n+2)), (-1)\^{}n*(2*n+3)/((n+1)*(n+2))],a[n]);
	\end{minipage}
	\[\displaystyle \tag{\% o1} 
	-\left( n+4\right) \, {a_{n+2}}-{a_{n+1}}+\left( n+1\right) \, {a_n}=0,\mbox{}
	\]
	which is the expected result. Next we recover the Fibonacci recurrence from the golden number and its conjugate.
	
	\noindent
	\begin{minipage}[t]{4.000000em}\color{red}\bfseries
		(\% i2)
	\end{minipage}
	\begin{minipage}[t]{\textwidth}\color{blue}
		sumhyperRE([(1-sqrt(5))\^{}n/2\^{}n, (1+sqrt(5))\^{}n/2\^{}n], a[n]);
	\end{minipage}
	\[\displaystyle \tag{\% o2} 
	-{a_{n+2}}+{a_{n+1}}+{a_n}=0\mbox{}
	\]
\end{example}

The latter example illustrates an important point of the algorithm. Indeed, when considering extension fields to determine hypergeometric term solutions, the conjugates of algebraic numbers involved are also part of the solution basis. We will give more details about this in Section \ref{sec3}.

\section{"Simple" formulas for hypergeometric terms}\label{sec2}

Let $a_n$ be a hypergeometric term over a field $\mathbb{K}$ of characteristic zero. Then by definition $r(n):=a_{n+1}/a_n\in\mathbb{K}(n)$. $\mathbb{K}$ is taken as the minimal extension field of $\mathbb{Q}$ where the numerator and the denominator of $r(n)$ split. We wish to write the formula of $a_n$ by means of only numbers appearing in the splitting field of $r(n)$ using factorials, and when not trivially possible, Pochhammer symbols. We want, moreover, that evaluations of the formula for non-negative integers do not produce other improper operation (usually at $0$) than the division by zero. Nevertheless, the representation we are tagging is mostly valid for evaluations with for non-negative integers. This is what we call a "simple" formula. One could say that a formula is considered to be "simple" when it presents more familiar objects from mathematical dictionaries in a reduced form. In the sense of computing formulas of hypergeometric terms, this consists of simplifying as much as possible, Pochhammer symbols to rational multiples of factorials with positive integer-linear arguments. In this section, we present preliminary steps to recover the representation $(\ref{eq1})$ and gather some classical rules as an algorithm to simplify its Pochhammer part. Similar computations can be found in \cite{koepf1995s}; what is worth to notice is the consideration we make to get "simple" formulas in Section \ref{sec3}.

Consider a rational function
\begin{equation}
	r(k):=\frac{P(k)}{Q(k)},~P(k),Q(k)\in\mathbb{K}[k],~ Q(k)\neq 0~ \text{ for integers } k\geqslant 0 \label{eq17}
\end{equation}
such that $P$ and $Q$ do not have non-negative integer roots. We also assume that roots of $P$ and $Q$ are all distinct. We will see in the next section how our approach prepares all rational functions used to compute hypergeometric term Pochhammer part "simple" formulas to satisfy these assumptions. For example, a rational function $r(k)$ with non-negative integer zeros and poles would implicitly be replaced by $r(k+m)$, where $m=\max\{j\in\mathbb{N}_{\geqslant 0}~:~Q(j)\cdot P(j)=0\}$\footnote{$\mathbb{N}_{\geqslant 0}=\{0,1,2,\ldots\}$}.

We consider a hypergeometric term defined with the property
\begin{equation}
	a_{k+1} = r(k) a_k, ~\text{ for integer } k\geqslant 0. \label{eq18}
\end{equation}
Computing a "simple" formula of such a term is to find its general expression $a_n$ for a positive integer $n$ provided that the corresponding initial values $a_0$ is given. That is the result of the product
\begin{equation}
	\prod_{k=0}^{n-1} r(k). \label{eq19}
\end{equation}
For that purpose, the first step is to split the polynomials of $r$ as follows
\begin{equation}
	r(k)=C\frac{(k+a_1)(k+a_2)\cdots(k+a_p)}{(k+b_1)(k+b_2)\cdots(k+b_q)}, \label{eq20}
\end{equation} 
where $p$ and $q$ are, respectively, the degrees of $P$ and $Q$; $C$ is a constant representing the ratio of the leading coefficients of $P$ and $Q$; and $-a_i$'s, $0\leqslant i\leqslant p$, and $-b_j$'s, $0\leqslant j\leqslant q$ are the zeros and poles of $r$, respectively. From the Pochhammer symbol definition, using $(\ref{eq20})$ it follows that
\begin{equation}
	\prod_{k=0}^{n-1} r(k) = C^n \frac{(a_1)_n(a_2)_n\cdots(a_p)_n}{(b_1)_n(b_2)_n\cdots(b_q)_n}. \label{eq21}
\end{equation}
Thus we are called to try simplifications of ratios and products of Pochhammer symbols, and some isolated ones. Many such computations can be found in books or undergraduate courses, see for example \cite[Exercises 1.1 - 1.5]{WolfBook}. Bellow we recall some classical ones.

$x$ and $y$ denote some numbers, and $j$ an integer.
\begin{description}
	\item[Isolated Rule] Assume $x$ is rational, then one can simplify $(x)_n$, according to the following cases.
			\begin{enumerate}
				\item if $x\in\mathbb{N}$, then
				\begin{eqnarray}
					(x)_n &=& x\cdot(x +1)\cdots(x+n-1)\nonumber\\
					&=& \frac{(x+n-1)!}{(x-1)!} \label{22}
				\end{eqnarray}
				\item Else if $x$ has a denominator equal to $2$, then let $s\in\mathbb{N}$ such that $x=\frac{s}{2}$. $s$ is necessarily an odd integer since $x\notin\mathbb{N}$. We set $s=2t+1$, $t\in\mathbb{N}_{\geqslant0}$, then it follows that
				\begin{eqnarray}
					(x)_n &=& \frac{s}{2}\cdot\left(\frac{s}{2}+1\right)\cdots\left(\frac{s}{2}+n-1\right)\nonumber\\
					&=& \frac{s\cdot(s+2)\cdots(s+2\cdot(n-1))}{2^n}\nonumber\\
					&=& \frac{(2t+1)\cdot(2(t+1)+1)\cdots(2(t+n-1)+1)}{2^n}\nonumber\\
					&=& \frac{\left(2\left(t+n\right)\right)!}{(2t)!\cdot(2t+2)\cdots(2(t+n-1)+2)\cdot 2^n}\nonumber\\
					&=& \frac{\left(2\left(t+n\right)\right)!}{(2t)!\cdot(t+1)\cdots(t+n)\cdot 4^n}\nonumber\\
					&=& \frac{\left(2(t+n)\right)!}{(2t)!4^n\binom{t+n}{n}n!}. \label{eq23}
				\end{eqnarray}
				\item Otherwise no simplification is done for $(x)_n$.
		\end{enumerate}
	
	\item[Ratio Rule] Assume $x-y=j>0$. Then we have
	\begin{eqnarray}
		\frac{(y)_n}{(x)_n} &=&\frac{(y)_j\cdot (y+j)\cdots (y+n-1)}{(y+j)\cdots (y+j+n-1)}\nonumber\\
		&=& \frac{(y)_j}{(y+n)\cdots (y+n+j-1)}\nonumber\\
		&=& \frac{(y)_j}{(y+n)_j}.  \label{eq25}
	\end{eqnarray}
	Therefore for $x-y=j\in\mathbb{Z}$,
	\begin{equation}
		\frac{(y)_n}{(x)_n} = \begin{cases}\frac{(y)_j}{(y+n)_j}~~\text{ if } j>0 \\[3mm] \frac{(x+n)_{-j}}{(x)_{-j}}~~\text{ if } j<0 \end{cases}.\label{eq26}
	\end{equation}
	This shows that differences between zeros and poles of $r$ in $(\ref{eq20})$ should be checked before applying the \textbf{Isolated Rule} in order to apply $(\ref{eq25})$ which can simplify two Pochhammer symbols at the same time. Fortunately, these nice computations can be done in Maxima by combining \textit{makegamma()}, \textit{makefact()}, \textit{minfactorial()} and \textit{factor()} as below.
	
	\noindent
	\begin{minipage}[t]{8ex}\color{red}\bf
		\begin{verbatim}
			(%i1) 
		\end{verbatim}
	\end{minipage}
	\begin{minipage}[t]{\textwidth}\color{blue}
		\begin{verbatim}
			r:pochhammer(7/3,n)/pochhammer(1/3,n);
		\end{verbatim}
	\end{minipage}
	\definecolor{labelcolor}{RGB}{100,0,0}
	\[\displaystyle
	\parbox{10ex}{$\color{labelcolor}\mathrm{\tt (\%o1) }\quad $}
	\frac{{{\left( \frac{7}{3}\right) }_{n}}}{{{\left( \frac{1}{3}\right) }_{n}}}\]
	
	\noindent
	\begin{minipage}[t]{8ex}\color{red}\bf
		\begin{verbatim}
			(%i2) 
		\end{verbatim}
	\end{minipage}
	\begin{minipage}[t]{\textwidth}\color{blue}
		\begin{verbatim}
			factor(minfactorial(makefact(makegamma(r))));
		\end{verbatim}
	\end{minipage}
	\definecolor{labelcolor}{RGB}{100,0,0}
	\[\displaystyle
	\parbox{10ex}{$\color{labelcolor}\mathrm{\tt (\%o2) }\quad $}
	\frac{\left( 1+3\cdot n\right) \cdot \left( 4+3\cdot n\right) }{4}\mbox{}
	\]
	
	\item[Product Rule] We consider the following two rules to simplify $(x)_n(y)_n$.
	\begin{itemize}
		\item Assume $y-x=1/2$, then multiplying $(x)_n$ and $(y)_n=(x+1/2)_n$ by $2^n$ leads to the relation
		\begin{equation}
			(x)_n\cdot \left(x+\frac{1}{2}\right)_n = \frac{(2x)_{2n}}{4^n}. \label{eq27}
		\end{equation}
	    \item Assume $y-x=j>0$, it is easy to see that
	    \begin{equation}
	    	(x)_n \cdot (y)_n = (x)_n\cdot (x+j)_n = \frac{(x)_{n+j}^2}{(x)_j(x+n)_j}.\label{eq28}
	    \end{equation}
	\end{itemize} 
\end{description}

More generally, one can find a hypergeometric term Pochhammer part "simple" formula having ratio with non-negative integer zeros and poles by applying the following algorithm.

\begin{algorithm}[h!]
	\caption{Compute $\prod_{k=1}^{n-1}r(k)$}\label{pochfactorsimp}
	\begin{algorithmic}[2]
		\Require  A rational function $r:=r(n)$ and a variable $n$.
		\Ensure A formula of $\prod_{k=1}^{n-1}r(k)$ in terms of factorial and Pochhammer symbols.
		\begin{enumerate}
			\item Factorize $r$ and write it in terms of linear factors and set 
			\begin{equation*}
				h:=r=C\frac{(n+a_1)(n+a_2)\cdots(n+a_p)}{(n+b_1)(n+b_2)\cdots(n+b_q)}.
			\end{equation*}
			\item Substitute $C$ by $C^n$ in $h$.
			\item For each $a_i$, $i=1,\ldots,p$ do
			\begin{enumerate}
				\item if there is $b_j$ in $h$ such that $a_i-b_j\in\mathbb{Z}$ then substitute $\frac{n+a_i}{n+b_i}$ by applying the \textbf{Ratio Rule} accordingly.
			\end{enumerate}
			\item For the remaining $a_i$'s (resp. $b_j$'s) do
				\begin{enumerate}
					\item if there is $a_{i'}$ (resp. $b_{j'}$) such that $a_i-a_{i'}=\pm 1/2$ or $a_i-a_{i'}\in\mathbb{Z}$ (resp. $a_j-a_{j'}=\pm 1/2$ or $a_j-a_{j'}\in\mathbb{Z}$) then substitute $(n+a_i)(n+a_{i'})$ (resp. $(n+b_j)(n+b_{j'})$) by applying the \textbf{Product Rule} accordingly.
				\end{enumerate}
			\item Substitute the remaining $n+a_i$'s and $n+b_j$'s by the result of the \textbf{Isolated Rule} applied to $(a_i)_n$ and $(b_j)_n$ respectively.
			\item Return $h$.
		\end{enumerate}
	\end{algorithmic}
\end{algorithm}

For a "fair" comparison of efficiency between our implementation and the current Maple \textit{LREtools[hypergeomsols]}, we did not considered the \textbf{Product Rule} in our Maxima and Maple codes (but it will certainly be the case in future versions) since the internal Maple command \textit{LREtools[hypergeomsols]} does not apply simplifications as we do. So these computations are supplementary steps that we use in the algorithm of Section \ref{sec3}. Algorithm \ref{pochfactorsimp} is implemented in our Maxima package as \textit{pochfactorsimp(r,n)}. Below we give some examples.
\clearpage

\begin{example}\item
	
	\noindent
	\begin{minipage}[t]{8ex}\color{red}\bf
		\begin{verbatim}
			(%i1) 
		\end{verbatim}
	\end{minipage}
	\begin{minipage}[t]{\textwidth}\color{blue}
		\begin{verbatim}
			pochfactorsimp(-1/(2*(n+1)*(2*n+1)),n);
		\end{verbatim}
	\end{minipage}
	\definecolor{labelcolor}{RGB}{100,0,0}
	\[\displaystyle
	\parbox{10ex}{$\color{labelcolor}\mathrm{\tt (\%o1) }\quad $}
	\frac{{{\left( -1\right) }^{n}}}{\left( 2\cdot n\right) !}\mbox{}
	\]
	
	\noindent
	\begin{minipage}[t]{8ex}\color{red}\bf
		\begin{verbatim}
			(%i2) 
		\end{verbatim}
	\end{minipage}
	\begin{minipage}[t]{\textwidth}\color{blue}
		\begin{verbatim}
			pochfactorsimp((2*n+3)^2/((n+1)*(2*n+1)),n);
		\end{verbatim}
	\end{minipage}
	\definecolor{labelcolor}{RGB}{100,0,0}
	\[\displaystyle
	\parbox{10ex}{$\color{labelcolor}\mathrm{\tt (\%o2) }\quad $}
	\frac{\left( 1+2\cdot n\right) \cdot {{2}^{n-1}}\cdot \left( 2\cdot \left( 1+n\right) \right) !}{\left( n+1\right) \cdot {{4}^{n}}\cdot {{n!}^{2}}}\mbox{}
	\]
\end{example}

Now that we have described an algorithm for computing "simple" formulas of hypergeometric terms given their ratios with no non-negative integer zeros and poles, let us move to the main algorithm of this paper where such rational functions of the Pochhammer parts are computed.

\section{Basis of hypergeometric term solutions}\label{sec3}

Let us rewrite $(\ref{eq2})$ as follows.
\begin{equation}
	P_d(n)a_{n+d} + P_{d-1}(n)a_{n+d-1}+\cdots P_1(n)a_{n+1}+P_0(n)a_{n}=0, \label{eq29}
\end{equation}
with polynomials $P_i(n)\in\mathbb{K}[n],i=0,\ldots, d$ such that $P_0(n)\cdot P_d(n)\neq 0$. $\mathbb{K}$ is a field of characteristic zero.

We have seen how to compute a holonomic recurrence equation of lowest order satisfied by a given number of linearly independent hypergeometric terms. Any computed hypergeometric term solution of such a holonomic recurrence equation is a linear combination of these linearly independent hypergeometric terms considered. The algorithm of this section is a kind of reverse process which for a given holonomic recurrence equation $(\ref{eq29})$ computes a basis of at most $d$ hypergeometric terms of the space of all hypergeometric term solutions of $(\ref{eq29})$.

In the first place, we establish $(\ref{eq1})$ to see hypergeometric terms in normal forms (see \cite[Chapter 3]{geddes1992algorithms}). Let $a_n, n\in\mathbb{N}_{\geqslant0}$, be a hypergeometric sequence such that $r(n) =a_{n+1}/a_n$ $\in\mathbb{K}(n)$. Then we have 
$$\dfrac{a_1}{a_0}=r(0),~ \dfrac{a_2}{a_1}=r(1),~\ldots,~\dfrac{a_n}{a_{n-1}}=r(n-1),~~n\geqslant1,$$
and therefore
\begin{equation}
	\frac{a_n}{a_0} = \prod_{k=0}^{n-1}\frac{a_{k+1}}{a_k}=\prod_{k=0}^{n-1}r(k) \Rightarrow a_n=a_0\prod_{k=0}^{n-1}r(k) \label{eq30}.
\end{equation}
Factorizing $r(n)$ over $\mathbb{K}$ gives
\begin{equation}
	r(n)= C\dfrac{\prod_{i=1}^{I}(n-\alpha_i)}{\prod_{j=1}^{J}(n-\beta_j)}, \label{eq31}
\end{equation}
where $C$ is a constant. Note that contrary to $(\ref{eq20})$, in $(\ref{eq31})$ $r(n)$ is considered in a more general setting;  $\alpha_i$ and $\beta_j$ are not uniquely determined and may have negative or positive real parts, which is more general than avoiding non-negative integer values. 

Combining $(\ref{eq30})$ and $(\ref{eq31})$ leads to
\begin{equation}
	a_n = a_0\cdot C^n\cdot \dfrac{(-\alpha_1)_n\cdots(-\alpha_I)_n}{(-\beta_1)_n\cdots(-\beta_J)_n}. \label{eq32}
\end{equation}

Now we want to write each Pochhammer symbol modulo $\mathbb{Z}$ in a certain real interval. That is to say that the real parts of the arguments of Pochhammer terms can be chosen belonging to an interval of amplitude $1$. This is an interesting observation made by van Hoeij. In our case, we choose to rewrite the Pochhammer symbols modulo $\mathbb{Z}$ so that $\alpha_i, \beta_j \in \mathcal{I}:=[-1,0)$. Each Pochhammer symbol is then substituted by a polynomial times another Pochhammer term whose argument differs by an integer $u$. Precisely, let $y$ be a real number (for the case of complex numbers, the computations are applied on their real parts), then its corresponding value in $\mathcal{I}$ is $u=y-\lfloor y\rfloor-1$ and we have
\begin{eqnarray}
	(y)_n &=& \dfrac{(u)_n\cdot(u+n)\cdots(y+n-1)}{u\cdot(u+1)\cdots(y-1)}\nonumber\\
	&=& (u)_n\cdot \dfrac{(u+n)_{y-u}}{(u)_{y-u}}\label{xy1}\\
	&=& \left(y-\lfloor y\rfloor - 1\right)_n \cdot \dfrac{\left(n+y-\lfloor y\rfloor - 1\right)_{\lfloor y\rfloor+1}}{\left(y-\lfloor y\rfloor - 1\right)_{\lfloor y\rfloor+1}}. \label{eq33}
\end{eqnarray}
After applying $(\ref{eq33})$ to each Pochhammer symbol in $(\ref{eq32})$, the remaining expression will have Pochhammer terms having arguments with real parts in $(0,1]$. These terms may have more coincidence than the $(-\alpha_i)_n$ and $(-\beta_j)_n$ in $(\ref{eq32})$ since all Pochhammer terms in $(\ref{eq32})$ whose arguments differ by an integer give the same Pochhammer term modulo $\mathbb{Z}$ after substitution. Therefore there exists a rational function $R(n)\in\mathbb{K}(n)$ and some constant numbers $\tilde{\alpha_1},\ldots,\tilde{\alpha_{I}}$, $\tilde{\beta}_,\ldots,\tilde{\beta_J},$ with real parts in $\mathcal{I},$ such that 
\begin{equation}
	a_n = R(n) \cdot C^n \cdot \dfrac{(-\tilde{\alpha_1})_n\cdots(-\tilde{\alpha_I})_n}{(-\tilde{\beta_1})_n\cdots(-\tilde{\beta_J})_n}. \label{eq34}
\end{equation}
The constant $a_0$ is neglected by linearity since we will look for a basis of hypergeometric term solutions of $(\ref{eq29})$.

Considering multiplicities $e_k$ over $\mathbb{Z}\setminus\{0\}$ and replacing $-\tilde{\alpha_i}$ and $-\tilde{\beta_j}$, by $\theta_k$, we get the normal form
\begin{equation}
	a_n = C^n\cdot R(n) \cdot h(n) := C^n \cdot R(n) \cdot \prod_{k=1}^{K} (\theta_k)_n^{e_k}~(\theta_k\in\mathbb{K}, \text{ with real part in } \mathcal{I}), \label{eq35}
\end{equation}
$K\leqslant \leqslant I+J$. This time all the involved data are uniquely determined. The ratio $r(n)$ can be rewritten as
\begin{equation}
	r(n) = \dfrac{a_{n+1}}{a_n}= \dfrac{R(n+1)}{R(n)} \cdot C \cdot h(n+1)/h(n)= \dfrac{R(n+1)}{R(n)} \cdot C \cdot \prod_{k=1}^{K} (n+\theta_k)^{e_k}~\in\mathbb{K}(n)  \label{eq36}.
\end{equation}

Mark van Hoeij uses Gamma representations in $(\ref{eq35})$ and denotes it singularity structure of $a_n$ (see \cite{van1999finite,cluzeau2006computing}). This representation can be seen as the end point of our algorithm when it computes an element of the basis of hypergeometric terms sought. In fact, the goal of computing a basis of all hypergeometric term solutions of $(\ref{eq29})$ is equivalent to finding solutions of $(\ref{eq29})$ with the structure $(\ref{eq35})$. 

\subsection{Monic factors modulo $\mathbb{Z}$}

\begin{lemma}(\cite[Algorithm Hyper]{petkovvsek1992hypergeometric}, \cite[Left and Right solutions]{cluzeau2006computing})
	The ratio of the Pochhammer part $h(n+1)/h(n)$ of hypergeometric term solutions of $(\ref{eq29})$ are built from monic factors of $P_d(n-d)$ for the numerators and $P_0(n-1)$ for the denominators.
\end{lemma}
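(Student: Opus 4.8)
The plan is to analyze the recurrence $(\ref{eq29})$ evaluated at a generic index $n$ and to extract from it a first-order relation for the ratio $r(n)=a_{n+1}/a_n$ of a hypergeometric term solution. Suppose $a_n = C^n R(n) h(n)$ is a hypergeometric term solution written in the normal form $(\ref{eq35})$, so that by $(\ref{eq36})$ we have $r(n) = C\,\dfrac{R(n+1)}{R(n)}\prod_{k=1}^{K}(n+\theta_k)^{e_k}$, with every $\theta_k$ having real part in $\mathcal{I}$. Dividing $(\ref{eq29})$ through by $a_n$ and using $a_{n+i}/a_n = \prod_{\ell=0}^{i-1} r(n+\ell)$, the recurrence becomes a rational-function identity in $n$; clearing denominators turns it into a polynomial identity. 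The factors $(n+\theta_k)^{e_k}$ with $e_k>0$ must then divide certain products of the $P_i$, and the key is to track where a shifted factor $(n+\theta_k+j)$ can possibly come from.

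First I would isolate the contribution of $h$ to the extreme terms. Writing $m(n):=h(n+1)/h(n)=\prod_k (n+\theta_k)^{e_k}$, the numerator of $m$ (the product over $e_k>0$) is exactly the "Pochhammer-part numerator'' whose factors we want to locate, and the denominator (the product over $e_k<0$) is the Pochhammer-part denominator. The standard telescoping argument — this is precisely the content of the cited Algorithm Hyper of Petkov\v sek and of the left/right solution analysis in \cite{cluzeau2006computing} — is: in the relation obtained from $(\ref{eq29})$, the coefficient of the top-shift term $a_{n+d}$ is $P_d(n)$, and after dividing by $a_n$ this term carries the factor $\prod_{\ell=0}^{d-1} m(n+\ell)$; similarly the bottom term $a_n$ keeps the coefficient $P_0(n)$ with no $m$-factor. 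Comparing the two ends of a Gosper-style normalization, any irreducible factor dividing the numerator of $m(n)$, namely some $(n+\theta_k)$, must — after accounting for the shift by $d$ coming from the $a_{n+d}$ term versus the $a_n$ term — divide $P_d(n-d)$; dually, any factor of the denominator of $m(n)$ must divide $P_0(n-1)$ (the shift by $1$ coming from comparing $a_{n+1}$ against $a_n$, or equivalently from the convention $m(n)=h(n+1)/h(n)$). The monicity is immediate: $h(n+1)/h(n)$ is a product of monic linear factors $n+\theta_k$ by construction in $(\ref{eq35})$, so its numerator and denominator are monic, hence they are products of monic factors of $P_d(n-d)$ and $P_0(n-1)$ respectively.

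More concretely, I would argue by contradiction on a factor $(n+\theta_k)$ with $e_k<0$ (a denominator factor of $m$): plugging the normal form into $(\ref{eq29})$ and clearing denominators, the only term in which $(n+\theta_k+1)^{|e_k|}$ in the denominator of $a_{n+1}/a_n$ can be cancelled is by the polynomial coefficient multiplying $a_{n+1}$, but to cancel it in the full cleared equation one needs $(n+\theta_k+1)\mid$ something that ultimately forces $(n+\theta_k+1)\mid P_0(n)$, i.e. $(n+\theta_k)\mid P_0(n-1)$; the modular-$\mathbb{Z}$ normalization guarantees $\theta_k$ is the canonical representative, so it is genuinely a monic factor of $P_0(n-1)$ modulo the integers. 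The symmetric statement for $e_k>0$ and $P_d(n-d)$ follows by the substitution $n\mapsto n$ in the reversed recurrence (reading $(\ref{eq29})$ from the top), which is the standard duality between "left'' and "right'' solutions.

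The main obstacle I anticipate is the bookkeeping of the index shifts and the justification that no spurious cancellation occurs among the middle coefficients $P_1,\dots,P_{d-1}$ — i.e. that a numerator factor of $h(n+1)/h(n)$ truly must appear in $P_d(n-d)$ and cannot be "hidden'' by a fortuitous combination of the interior terms. This is handled exactly as in Petkov\v sek's Algorithm Hyper: one looks at the equation after multiplying by $\mathrm{lcm}$ of denominators and compares leading/trailing behavior, using that $R(n+1)/R(n)$ contributes only factors whose shifts lie in $\mathbb{Z}$ and hence cannot introduce a new residue class; since the $\theta_k$ are normalized modulo $\mathbb{Z}$, each residue class is pinned down and the extreme coefficients $P_d(n-d)$ and $P_0(n-1)$ are the only possible sources. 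I would therefore cite \cite{petkovvsek1992hypergeometric} and \cite{cluzeau2006computing} for this cancellation lemma and present only the shift-tracking explicitly, since that is the part specific to our normalization convention $\mathcal{I}=(0,1]$.
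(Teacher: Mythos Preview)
The paper does not supply its own proof of this lemma: it is stated purely as a citation of Petkov\v{s}ek's Algorithm~Hyper and of the left/right-solution analysis in Cluzeau--van~Hoeij, and the text immediately moves on to use it. Your proposal is essentially a reconstruction of exactly that cited argument---rewrite $(\ref{eq29})$ as a relation in $r(n)=a_{n+1}/a_n$, clear denominators, and observe that each irreducible monic factor of $h(n+1)/h(n)$ survives in every term except one of the two extreme ones, forcing it to divide the corresponding extreme coefficient $P_0$ or $P_d$ (shifted). So there is no methodological difference to report: you are sketching precisely the proof the paper defers to.

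Two small remarks on your write-up. First, your ``contradiction'' paragraph for $e_k<0$ is loosely phrased: the factor $(n+\theta_k)^{|e_k|}$ does not sit only under the $a_{n+1}$ term, it sits under \emph{every} term with $i\geq 1$ after dividing by $a_n$; the correct mechanism is that after multiplying through by $\prod_{j=0}^{d-1} q(n+j)$ (with $q$ the denominator of $m$), the factor $q(n+d-1)$ divides all summands except the $i=d$ one, whence $q(n+d-1)\mid P_d(n)$, and dually $p(n)\mid P_0(n)$. Second, be careful with the shift bookkeeping and with which of $P_0,P_d$ feeds numerators versus denominators: compare the lemma's wording with step~(4) of Algorithm~\ref{myhyper} and with the worked example around $(\ref{eq37})$, where the trailing coefficient supplies the numerator candidates and the leading coefficient the denominator candidates. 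Getting these roles and the exact shifts ($n-1$ versus $n$, $n-d$ versus $n-d+1$) consistent is the only genuine hazard here; the underlying divisibility argument you give is the right one.
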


This lemma allows us to apply factorization modulo $\mathbb{Z}$ on $P_0(n)$ and $P_d(n)$. In fact the ratio of the Pochhammer part $h(n+1)/h(n)$ in $(\ref{eq36})$ is obtained from factorization of $P_0(n)$ and $P_d(n)$ modulo $\mathbb{Z}$. Let us generate a recurrence equation that will be used while describing the steps of our algorithm.

\noindent
\begin{minipage}[t]{4.000000em}\color{red}\bfseries
	(\% i1)
\end{minipage}
\begin{minipage}[t]{\textwidth}\color{blue}
	\begin{verbatim}
	RE:sumhyperRE([binomial(n+3,n),1/n!,(-1)\^{}n/n,
			(-1)\^{}n/pochhammer(1/2,n)\^{}2],a[n])\$
	\end{verbatim}
\end{minipage}
\vspace{0.25cm}

\noindent
We do not display the output to save space. We will refer to this recurrence equation as $(RE)$. The leading term is 

\noindent
\begin{minipage}[t]{4.000000em}\color{red}\bfseries
	(\% i2)
\end{minipage}
\begin{minipage}[t]{\textwidth}\color{blue}
	first(lhs(RE));
\end{minipage}
\begin{multline*}
\tag{\% 2} 
\left( n+2\right) \, \left( n+3\right) \, \left( n+4\right) \, {{\left( 2 n+7\right) }^{2}} \operatorname{(}64 {{n}^{11}}+1536 {{n}^{10}}+16176 {{n}^{9}}+98080 {{n}^{8}}+377372 {{n}^{7}}\\
+955200 {{n}^{6}}+1584741 {{n}^{5}}+1631354 {{n}^{4}}+852544 {{n}^{3}}-25229 {{n}^{2}}-264212 n-94472\operatorname{)}\, {a_{n+4}},\mbox{}	
\end{multline*}

and the trailing term

\noindent
\begin{minipage}[t]{4.000000em}\color{red}\bfseries
	(\% i3)
\end{minipage}
\begin{minipage}[t]{\textwidth}\color{blue}
	last(lhs(RE));
\end{minipage}
\begin{multline*}
\tag{\% o3}  4 n\, \left( n+4\right)  \operatorname{(}64 {{n}^{11}}+2240 {{n}^{10}}+35056 {{n}^{9}}+323344 {{n}^{8}}+1949788 {{n}^{7}}+8053956 {{n}^{6}}\\
+23188049 {{n}^{5}}+46338535 {{n}^{4}}+62583534 {{n}^{3}}+53821965 {{n}^{2}}+26011175 n+5133154\operatorname{)}\, {a_n}.\mbox{}
\end{multline*}

For simplicity of explanation, we will present computations over the rationals. The case of extension fields works similarly, this choice is just to avoid lengthy notations for roots labeling. 

For the leading polynomial coefficient, the monic factors to be considered after factorization in $\mathbb{Q}$ modulo $\mathbb{Z}$ with roots real parts in $\mathcal{I}$ are
$$(n+1)^{e_1}\left(n + \frac{1}{2}\right)^{e_2},~ \text{ for }0\leqslant e_1 \leqslant 3,~0\leqslant e_2 \leqslant 2.$$
For the trailing term we have
$$(n+1)^e~ \text{ for } 0\leqslant e \leqslant2.$$
Therefore ratios of Pochhammer parts of hypergeometric term solutions are among the following
\begin{align}
	& 1, \frac{1}{(n+1)}, \frac{1}{(n+1)^2}, \frac{1}{(n+1)^3}, \frac{1}{\left(n+\frac{1}{2}\right)}, \frac{1}{\left(n+\frac{1}{2}\right)^2}, (n+1) \nonumber\\
	& \frac{(n+1)}{\left(n+\frac{1}{2}\right)}, \frac{(n+1)}{\left(n+\frac{1}{2}\right)^2}, \frac{(n+1)^2}{\left(n+\frac{1}{2}\right)}, \frac{(n+1)^2}{\left(n+\frac{1}{2}\right)^2} \label{eq37}
\end{align}

Observe that none of these ratios has a non-negative integer zero or pole, hence the type of rational function that we treat with Algorithm \ref{pochfactorsimp}. This is made possible by the fact that we consider factorization modulo $\mathbb{Z}$ with roots real parts in $\mathcal{I}$.

Moreover, not all ratios in $(\ref{eq37})$ should be considered because the exponents of each linear factor appearing in the possible ratios of hypergeometric term solutions can be bounded from the given holonomic recurrence equation. For this purpose van Hoeij's algorithm uses the notion of valuation growth or local types of difference operators at finite singularities \cite[Definition 9]{van1999finite}. Such a point is simply a root modulo $\mathbb{Z}$ of the trailing or the leading polynomial coefficient of $(\ref{eq29})$ as we considered.

Since we are already computing ratios of Pochhammer parts of hypergeometric term solutions, we proceed in a slightly different way than what is described in (\cite{van1999finite, cluzeau2006computing}) to compute exponent bounds at finite singularities. We observed that this reduces to take minimum exponents (or valuations) taken by the corresponding factors modulo $\mathbb{Z}$ in the trailing and the leading polynomial coefficients of the initial recurrence equation as lower bounds. Upper bounds are automatically found while computing ratios. Coming back to our example, it follows that ratios with denominator $(n+1/2)$ should be removed. Therefore the remaining ratios are
\begin{align}
	& 1, \frac{1}{(n+1)}, \frac{1}{(n+1)^2}, \frac{1}{(n+1)^3}, \frac{1}{\left(n+\frac{1}{2}\right)^2}, (n+1) \nonumber\\
	&\frac{(n+1)}{\left(n+\frac{1}{2}\right)^2}, \frac{(n+1)^2}{\left(n+\frac{1}{2}\right)^2}. \label{eq38}
\end{align}

Note that these considerations on monic factors of leading and trailing polynomial coefficients can already be seen as an important efficiency gain when comparing computations with those in \cite{petkovvsek1992hypergeometric} which have to consider more cases. We are going to see more tools in the remaining part of the algorithm that will also filter the ratios in $(\ref{eq38})$.

\subsection{Local type at infinity}

Without ambiguity, we will more often use the terminology "local type" instead of "local type at infinity" since we only consider computations at infinity. This is about a characteristic property of hypergeometric term solutions of holonomic recurrence equations.

We study the behavior of a hypergeometric term ($a_n$) ratio $r(n)$ at infinity. Indeed, at $\infty$ we can write
\begin{equation}
	r(n) = c\cdot n^{\nu}\cdot \left(1+\dfrac{b}{n} + O\left(\dfrac{1}{n^2}\right)\right), \label{eq39}
\end{equation}
with the unique triple $\left(\nu,c,b\right)$ called the local type of $a_n$ at $\infty$.

\begin{theorem}[Fuchs Relations] $\label{fuchs}$ Let $R(n)=\frac{N(n)}{U(n)}$ with $N(n),U(n)$ $\in\mathbb{K}[n]$. The following relations between the local type of a hypergeometric term $a_n$ given by $(\ref{eq35})$ hold:
	\begin{itemize}
		\item[i.] $\nu=\sum_{k=1}^{K}e_k$,
		\item[ii.] $b=\sum_{k=1}^{K}\theta_k~e_k ~ + \deg(N(n)) - \deg(U(n)),$
		\item[iii.]$c=C,$
	\end{itemize}
	where $(\nu,c,b)$ denotes the local type of $a_n$ at $\infty.$
\end{theorem}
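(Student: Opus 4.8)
The plan is to compute the asymptotic expansion of the ratio $r(n) = a_{n+1}/a_n$ directly from the normal form $(\ref{eq35})$ and read off the triple $(\nu, c, b)$ by matching with $(\ref{eq39})$. Starting from $(\ref{eq36})$, we have
\[
r(n) = C \cdot \frac{R(n+1)}{R(n)} \cdot \prod_{k=1}^{K} (n+\theta_k)^{e_k},
\]
so the three factors $C$, $R(n+1)/R(n)$, and $\prod_k (n+\theta_k)^{e_k}$ can be analyzed separately and their expansions multiplied. The constant factor contributes exactly $c = C$, which already gives (iii); the remaining work is to track the leading power of $n$ and the coefficient of $1/n$ in the other two factors.

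First I would expand the Pochhammer contribution. For a single linear factor, $(n+\theta_k)^{e_k} = n^{e_k}(1+\theta_k/n)^{e_k} = n^{e_k}\bigl(1 + e_k\theta_k/n + O(1/n^2)\bigr)$. Taking the product over $k$ gives $\prod_k (n+\theta_k)^{e_k} = n^{\sum_k e_k}\bigl(1 + (\sum_k e_k\theta_k)/n + O(1/n^2)\bigr)$, so this factor contributes $\sum_k e_k$ to $\nu$ and $\sum_k e_k\theta_k$ to $b$. Next I would expand the rational factor $R(n+1)/R(n)$ where $R = N/U$. Writing $N(n) = a_N n^{\deg N} + a_N' n^{\deg N - 1} + \cdots$, one gets $N(n+1)/N(n) = 1 + \deg(N)/n + O(1/n^2)$ — the next-order correction from the subleading coefficient cancels in the ratio at order $1/n$ — and similarly $U(n+1)/U(n) = 1 + \deg(U)/n + O(1/n^2)$. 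Hence $R(n+1)/R(n) = 1 + (\deg N - \deg U)/n + O(1/n^2)$, contributing $0$ to $\nu$ and $\deg(N) - \deg(U)$ to $b$.

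Finally I would multiply the three expansions: the powers of $n$ combine to $n^{\sum_k e_k}$, giving $\nu = \sum_k e_k$ and establishing (i); the $1/n$ coefficients add (since $(1+x/n)(1+y/n) = 1 + (x+y)/n + O(1/n^2)$), giving $b = \sum_k \theta_k e_k + \deg(N) - \deg(U)$ and establishing (ii); and the overall constant is $C$, re-confirming (iii). Matching with the defining expansion $(\ref{eq39})$ and invoking the uniqueness of the triple $(\nu, c, b)$ asserted there completes the proof.

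The only real subtlety — the point I would be most careful about — is the claim that the subleading coefficients of $N$ and $U$ drop out at order $1/n$ in the ratios $N(n+1)/N(n)$ and $U(n+1)/U(n)$; this is a short but genuine computation (for $P(n) = a n^m + a' n^{m-1} + \cdots$ one has $P(n+1) = a n^m + (am + a') n^{m-1} + \cdots$, so $P(n+1)/P(n) = 1 + m/n + O(1/n^2)$ with no dependence on $a'$), and it is what makes $b$ depend on $R$ only through $\deg N - \deg U$ rather than through finer data. Everything else is bookkeeping with geometric-series expansions, and no step requires more than the first two terms of each expansion.
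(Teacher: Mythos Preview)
Your proof is correct and follows essentially the same route as the paper: decompose $r(n)$ via $(\ref{eq36})$ into the constant $C$, the rational factor $R(n+1)/R(n)$, and the product $\prod_k (n+\theta_k)^{e_k}$, expand each to order $1/n$, and multiply. The only cosmetic difference is that the paper handles $R(n+1)/R(n)$ by writing $R(n)=c_R n^{\delta}(1+b_R/n+O(1/n^2))$ abstractly and showing $b_R$ cancels in the ratio, whereas you obtain the same cancellation by working with $N(n+1)/N(n)$ and $U(n+1)/U(n)$ separately from their leading polynomial coefficients; both arguments establish the key point that $R$ enters $b$ only through $\deg N-\deg U$.
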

\begin{proof}
	From $(\ref{eq36})$ we know that
	\begin{equation}
		r(n)=\dfrac{a_{n+1}}{a_n} = C\cdot \left(\dfrac{R(n+1)}{R(n)}\cdot \prod_{k=1}^{K} (n+\theta_k)^{e_k}\right) \label{eq40}.
	\end{equation}
	We would like to compute a truncated asymptotic expansion of $(\ref{eq40})$. This can be seen as the result of the product of asymptotic expansions of the form $(\ref{eq39})$ of $\frac{R(n+1)}{R(n)}$ and $\prod_{k=1}^{K} (n+\theta_k)^{e_k}$ times $C$. Since $R(n)=\frac{N(n)}{U(n)}$, the highest degree of $n$ in its asymptotic expansion is $\delta = \deg(N(n))-\deg(U(n))$. Hence we read as
	\begin{equation}
		R(n) = c_R \cdot n^{\delta} \cdot \left(1+\dfrac{b_R}{n}+O\left(\dfrac{1}{n^2}\right)\right), \label{eq41}
	\end{equation}
	for some constants $c_R, b_R$. Let us now deduce a truncated asymptotic expansion of $R(n+1)$.
	\begin{eqnarray}
		R(n+1) &=& c_R\cdot(n+1)^{\delta}\cdot \left(1+\dfrac{b_R}{n+1}+O\left(\dfrac{1}{n^2}\right)\right)\nonumber\\
		&=& c_R\cdot n^{\delta} \left(1+\dfrac{1}{n}\right)^{\delta}\cdot \left(1+\dfrac{b_R}{n\left(1+\dfrac{1}{n}\right)}+O\left(\dfrac{1}{n^2}\right)\right)\nonumber\\
		&=& c_R\cdot n^{\delta}\left(1+\dfrac{\delta}{n} + \sum_{j=2}^{\delta}\binom{\delta}{j}\left(\dfrac{1}{n}\right)^j\right)\cdot\left(1+\dfrac{b_R}{n}+O\left(\dfrac{1}{n^2}\right)\right)\nonumber\\
		&=& c_R\cdot n^{\delta}\left(1+\dfrac{b_R+\delta}{n}+O\left(\dfrac{1}{n^2}\right)\right). \label{eq42}
	\end{eqnarray}
	
	Thus from $(\ref{eq41})$ and $(\ref{eq42})$ the first order asymptotic expansion of $\frac{R(n+1)}{R(n)}$ yields
	\begin{eqnarray}
		\frac{R(n+1)}{R(n)} &=& \dfrac{1+\dfrac{b_R+\delta}{n}+O\left(\dfrac{1}{n^2}\right)}{1+\dfrac{b_R}{n}+O\left(\dfrac{1}{n^2}\right)}\nonumber\\
		&=& \left(1+\dfrac{b_R+\delta}{n}+O\left(\dfrac{1}{n^2}\right)\right)\cdot \left(1-\dfrac{b_R}{n}+O\left(\dfrac{1}{n^2}\right)\right)\nonumber\\
		&=& 1+\dfrac{\delta}{n}+O\left(\dfrac{1}{n^2}\right) = 1+\dfrac{\deg(N(n))-\deg(U(n))}{n}+O\left(\dfrac{1}{n^2}\right).\label{eq43}
	\end{eqnarray}
	
	On the other hand
	\begin{eqnarray}
		\left(n+\theta_k\right)^{e_k} &=& n^{e_k}\cdot\left(1+\dfrac{\theta_k}{n}\right)^{e_k}\nonumber\\
		&=& n^{e_k}\cdot\left(1 + \dfrac{\theta_k e_k}{n} + \sum_{j=2}^{e_k}\binom{e_k}{j}\left(\dfrac{\theta_k}{n}\right)^j\right)\nonumber\\
		&=& n^{e_k}\cdot \left(1+\dfrac{\theta_k e_k}{n} + O\left(\dfrac{1}{n^2}\right)\right), \label{eq44}
	\end{eqnarray}
	therefore
	\begin{equation}
		\prod_{k=1}^{K} (n+\theta_k)^{e_k} = n^{\sum_{k=1}^{K}e_k}\cdot \left(1+\dfrac{\sum_{k=1}^{K}\theta_k e_k}{n}+O\left(\dfrac{1}{n^2}\right) \right). \label{eq45}
	\end{equation}
	Finally according to $(\ref{eq40})$, the expansion sought is obtained by the product of $(\ref{eq43})$ and $(\ref{eq45})$ times $C$. That is
	\begin{eqnarray}
		r(n) &=& C\cdot n^{\sum_{k=1}^{K}e_k}\cdot \left(1+\dfrac{\sum_{k=1}^{K}\theta_k e_k}{n}+O\left(\dfrac{1}{n^2}\right) \right)\nonumber\\
		&\phantom{=}& \cdot \left(1+\dfrac{\deg(N(n))-\deg(U(n))}{n}+O\left(\dfrac{1}{n^2}\right)\right)\nonumber\\
		&=& C\cdot n^{\sum_{k=1}^{K}e_k} \left( 1 + \dfrac{\sum_{k=1}^{K}\theta_k e_k + \deg(N(n))-\deg(U(n))}{n} + O\left(\dfrac{1}{n^2}\right)\right), \label{eq46}
	\end{eqnarray}
	from which one easily read off the data of the theorem.
\end{proof}

The first two relations in this theorem tell us that for the local type $(\nu,c,b)$ of a hypergeometric term $a_n$, $\nu$ and $b$ can be found directly from a ratio representing its Pochhammer part. Indeed, observe that modulo $\mathbb{Z}$, the second relation of the theorem reads as
\begin{equation}
	b=\sum_{k=1}^{K}\theta_k~e_k. \label{eq47}
\end{equation}

The third relation will be considered later in this subsection. It is straightforward to find $\nu$ and $b$ corresponding to a hypergeometric term local type from the ratio of its Pochhammer part. Thus this constitute the next step after obtaining ratios as in $(\ref{eq38})$. For example $(n+1)/(n+1/2)^2= n^{-1}(1-1/(4n^2)+O(1/n^3))$ and therefore $\nu=-1$ and $b=-1$ (modulo $\mathbb{Z}$). As mentioned earlier, the map $y\mapsto y-\lfloor y \rfloor -1$ is used to find the correspondence of $y$ modulo $\mathbb{Z}$ in $\mathbb{I}$.

Next, we explain how the local types of hypergeometric term solutions of $(\ref{eq29})$ are computed. This step is considered with the highest priority in our algorithm, because if the set of local types of hypergeometric term solutions of a given holonomic recurrence equation is empty, then there is no hypergeometric term solution over the considered field.

For this step, van Hoeij's algorithm uses the Newton polygon of the difference operator (see \cite[Section 3]{van1999finite}). However, we proceed differently. Our idea is to rewrite $(\ref{eq29})$ for ratios of hypergeometric term solutions, substitute $(\ref{eq39})$ inside, and compute the asymptotic expansion of the nonzero side to find equations for the local types by equating the result to $0$. This process is the same Petkov\v{s}ek used to develop its algorithm Poly (see \cite[Algorithm Poly]{petkovvsek1992hypergeometric}).

Let $a_n$ be a hypergeometric term solution of this equation such that $a_{n+1}=r(n) a_n$ for a rational function $r$. $(\ref{eq29})$ can then be written for $r(n)$ as
\begin{equation}
	\sum_{i=0}^{d}P_i\prod_{j=0}^{i-1} r(n+i) =0. \label{eq48}
\end{equation}
We assume 
\begin{equation}
	r(n)=c\cdot n^{\nu}\cdot\left(1+O\left(\dfrac{1}{n}\right)\right) \label{eq49}
\end{equation}
and we substitute this in $(\ref{eq48})$. Similarly as we did in the proof of Theorem $\ref{fuchs}$, we make computations that yield the possible values of $\nu$ and $c$. If such values are found, say $(\nu_{\text{cand}}, c_{\text{cand}})$, then we rewrite $r(n)$ as
\begin{equation}
	c_{\text{cand}}\cdot n^{\nu_{\text{cand}}}\cdot \left(1+\dfrac{b}{n}+O\left(\dfrac{1}{n^2}\right)\right) \label{eq50}
\end{equation}
and we make new computations to find $b$.

Summarized, our procedure to find local types $(\nu,c,b)$ of hypergeometric term solutions of $(\ref{eq29})$ consists in the following items:
\begin{enumerate}
	\item we compute the possible values for $\nu$;
	\item for each value of $\nu$,
	\begin{itemize}
		\item[2-a] we compute possible values for $c$,
		\item[2-b] for each value found for $c$, we use $\nu$ and $c$ to compute the possible values for $b$; 
		\item[2-c] for each value found for $b$, $(\nu,c,b)$ constitutes a local type of a hypergeometric term solution of $(\ref{eq29})$.
	\end{itemize}
\end{enumerate}

Let us now explain how each value is computed.

\begin{itemize}
	
	\item Computing $\nu$:
	
	Substitute $(\ref{eq49})$ in $(\ref{eq48})$ gives the following terms on the left-hand side 
	\begin{equation}
		c^i\cdot n^{i\cdot\nu}\cdot P_i\cdot \left(1 + O\left(\dfrac{1}{n}\right)\right),~~(0\leqslant i\leqslant d) \label{eq51}
	\end{equation}
	which is equivalent to
	\begin{equation}
		l_i\cdot c^i\cdot n^{i\cdot\nu + \deg(P_i)}\cdot \left(1 + O\left(\dfrac{1}{n}\right)\right), ~~(0\leqslant i\leqslant d)\label{eq52}
	\end{equation}
	where $l_i$ denotes the leading coefficient of $P_i$. Since we are dealing with an equality with right-hand side $0$, the terms having the highest power of $n$ in the asymptotic expansion of the equation left-hand must be zero. However this is only possible if a term of the form $(\ref{eq52})$ has the same power of $n$ with some other terms so that they add to $0$. Therefore we deduce that possible candidates for $\nu$ are integer solutions of linear equations coming from equalities of powers of $n$ for two different terms of the form $(\ref{eq52})$. That is for $0\leqslant i\neq j\leqslant d$, we have the equation
	\begin{equation}
		i\cdot\nu + \deg(P_i) = j\cdot\nu + \deg(P_j) \label{53}
	\end{equation}
	and therefore a possible value for $\nu$ is
	\begin{equation}
		\nu_{i,j} = \dfrac{\deg(P_j) - \deg(P_i)}{i-j}, \label{54}
	\end{equation}
	if the computed value is an integer. 
	
	We then compute $\binom{d}{2}$ such values for $(\ref{eq29})$ and keep the integers. Note that two different couples of terms may give the same value for $\nu$, meaning that the corresponding addition to zero involves all the underlying terms, which is the point of the next item. 
	
	\item Computing $c$:
	
	Assume that we have found a value $\nu_{i,j}\in\mathbb{Z}$ corresponding to $k$ terms in the equation $(\ref{eq48})$ with indices $0\leqslant u_1\neq u_2\neq\ldots\neq u_k \leqslant d$. Then from $(\ref{eq52})$ one easily see that a candidate for $c$ is a solution of the polynomial equation
	\begin{equation}
		l_{u_1}\cdot c^{u_1} + l_{u_2}\cdot c^{u_2} + \cdots+ l_{u_k}\cdot c^{u_k} = 0.\label{eq55}
	\end{equation}
	In fact, since the corresponding terms must add to zero in the asymptotic expansion, their leading coefficients must equal zero. Note that $(\ref{eq55})$ is solved over the considered field $\mathbb{K}$.
	
	Thus,for  each value $c_{i,j}\in\mathbb{K}$ which is a zero of $(\ref{eq55})$ for a given $\nu_{i,j}$, $(\nu_{i,j}, c_{i,j})$ is already a possible couple to be completed for the local type of a hypergeometric term solution of $(\ref{eq29})$.
	
	\item Computing $b$:
	
	For a computed couple $(\nu_{i,j}, c_{i,j})$ as explained above, we rewrite $r(n)$ as 
	\begin{equation}
		c_{i,j}\cdot n^{\nu_{i,j}}\cdot \left(1+\dfrac{b}{n}+O\left(\dfrac{1}{n^2}\right)\right) \label{eq56},
	\end{equation}
	with unknown $b$.  
	
	After substituting $(\ref{eq56})$ in $(\ref{eq48})$ and computing again the asymptotic expansion, terms with highest powers of $n$ add to zero, and therefore the left-hand side of the resulting equation must have a leading term with coefficient as a polynomial in the variable $b$. Since that polynomial must be zero, the possible values for $b$ are its roots. This can be done by computing asymptotic expansion and solve the coefficients equal to zero for the unknown $b$. Finally if we find values for $b\in\mathbb{K}$ then we have found for each $b$ a local type $(\nu,c,b)$ of a possible hypergeometric term solution of $(\ref{eq29})$ over $\mathbb{K}$.			
\end{itemize}

Hence we get the following algorithm.

\begin{algorithm}[h!]
	\caption{Compute local types of all hypergeometric term solutions of $(\ref{eq29})$}\label{localtype}
	\begin{algorithmic}[2]
		\Require  Polynomials
		\begin{equation*}
			P_i(n)	\in \mathbb{K}[n], i=0,\ldots,d \mid P_d(n)\cdot P_0(n)\neq 0
		\end{equation*}
		\Ensure The set of all local types of hypergeometric term solutions of the holonomic RE
		\begin{equation*}
			\sum_{i=0}^{d} P_i(n) a_{n+i} = 0.
		\end{equation*}
		\begin{enumerate}
			
			\item Set $L=\{\}$.
			
			\item For all pairs $\{i,j\}\in\{0,1,\ldots,d\}$, compute
			\begin{equation}
				\nu_{i,j} = \dfrac{\deg(P_j) - \deg(P_i)}{i-j}. \label{eq57}
			\end{equation}
			
			\item For each integer $\nu_{i,j}$ computed in $(\ref{eq57})$, compute the set of solutions in $\mathbb{K}$, say $S_{c,i,j}$, of the polynomial equation
			\begin{equation}
				l_{u_1}\cdot c^{u_1} + l_{u_2}\cdot c^{u_2} + \cdots+ l_{u_j}\cdot c^{u_k} = 0 \label{eq58},
			\end{equation}
			where $l_{u_1},l_{u_2},\ldots,l_{u_k}$ are the leading coefficients of the polynomials $P_{u_1}, P_{u_2},\ldots,P_{u_k}$, $0\leqslant u_1\neq u_2\neq\ldots\neq u_k \leqslant d$ satisfying $(\ref{eq57})$ for the same integer $\nu_{i,j}$.
			\begin{itemize}
				\item[(a)] For each element $c_{i,j}$ of $S_{c,i,j}$ set
				\begin{equation}
					r(n)=c_{i,j}\cdot n^{\nu_{i,j}}\cdot \left(1+\dfrac{b}{n}\right) \label{eq59}.
				\end{equation}
			\end{itemize}
%
%
%
			\item
			\begin{itemize}	 
				\item[(b)] Compute the coefficient $T_{i,j}(b)$ of the first non-zero term of the asymptotic expansion of
				\begin{equation}
					\sum_{i=0}^{d}P_i\prod_{j=0}^{i-1} r(n+i). \label{eq60}
				\end{equation}
				\item[(c)] Solve $T_{i,j}(b)=0$ in $\mathbb{K}$ for the unknown $b$ and define $S_{b,i,j}$ to be the set of solutions.
				\item[(d)] For each element $b_{i,j}\in S_{b,i,j},$ add the triple $(\nu_{i,j},c_{i,j},b)$ to $L$.
			\end{itemize}
			\item Return $L$.
		  \end{enumerate}
	\end{algorithmic}
\end{algorithm}

\begin{theorem} \label{theolocaltype} Algorithm $\ref{localtype}$ finds all the local types $(\nu,c,b)$ of all hypergeometric term solutions of $(\ref{eq29})$.
\end{theorem}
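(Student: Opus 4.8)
The plan is to verify that Algorithm \ref{localtype} correctly implements the three-stage search described above: the key point is that a triple $(\nu,c,b)$ is a local type of a hypergeometric term solution $a_n$ of $(\ref{eq29})$ if and only if it survives all three filtering steps, and conversely every surviving triple arises this way. We argue in both directions. For the ``only if'' direction, suppose $a_n$ is a hypergeometric term solution with ratio $r(n)$ having local type $(\nu,c,b)$, i.e. $r(n)=c\cdot n^{\nu}(1+b/n+O(1/n^2))$. Substituting into the rewritten equation $(\ref{eq48})$, each summand $P_i\prod_{j=0}^{i-1}r(n+j)$ has the asymptotic form $(\ref{eq52})$, namely $l_i c^i n^{i\nu+\deg P_i}(1+O(1/n))$. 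Since the whole sum vanishes identically, the term(s) of maximal order in $n$ must cancel; cancellation requires at least two indices $i\neq j$ with $i\nu+\deg P_i = j\nu+\deg P_j$, which is exactly equation $(\ref{53})$ and forces $\nu=\nu_{i,j}$ as in $(\ref{54})$. So $\nu$ must be one of the finitely many integers the algorithm enumerates in step 2. Given that $\nu=\nu_{i,j}$, collecting all indices $u_1,\dots,u_k$ achieving the maximal order $\nu_{i,j}$, the coefficient of that leading power of $n$ is $\sum_\ell l_{u_\ell} c^{u_\ell}$, which must be zero — this is $(\ref{eq55})$/$(\ref{eq58})$, so $c$ lies in the set $S_{c,i,j}$ computed in step 3. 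Finally, with $\nu,c$ fixed and $r(n)$ written as in $(\ref{eq56})$, substituting into $(\ref{eq48})$ and expanding further, the new leading (first non-zero) term has a coefficient that is a polynomial $T_{i,j}(b)$; it must vanish, so $b\in S_{b,i,j}$. Hence $(\nu,c,b)$ is added to $L$.

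For the ``if'' direction — that every triple $L$ outputs genuinely occurs as a local type of some hypergeometric term solution — one appeals to the structure established earlier: the local-type computation is precisely the necessary set of conditions that Petkov\v{s}ek's algorithm Poly extracts, and the triples that pass are exactly the admissible leading-order data; the remaining degrees of freedom (the Pochhammer part $h(n)$ and the rational factor $R(n)$) are resolved in the later steps of the overall algorithm, so here one only needs that the conditions are \emph{necessary} for existence, together with the converse consistency check that for each such triple the subsequent steps can be carried out. In other words, completeness of $L$ (no local type is missed) is the essential content of the theorem, and this follows from the ``only if'' argument above, since every hypergeometric term solution has a well-defined local type and we have shown any such triple must survive all three filters.

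The main obstacle is handling the cancellation bookkeeping carefully, in particular two subtleties. First, in step 2 several distinct pairs $\{i,j\}$ may yield the \emph{same} integer $\nu$, and the correct polynomial equation for $c$ in $(\ref{eq58})$ must involve \emph{all} indices $u_\ell$ achieving the maximal power $i\nu+\deg P_i$ simultaneously, not just one pair — one must check that the algorithm's indexing in step 3 really collects this full set and that no genuine value of $c$ is lost by considering pairs in isolation. Second, one must argue that expanding $(\ref{eq48})$ with the refined ansatz $(\ref{eq56})$ does produce a genuine polynomial constraint $T_{i,j}(b)=0$ of positive degree (so that finitely many $b$ result) rather than a vacuous identity; this is where the argument parallels the proof of Theorem \ref{fuchs}, and the same $(1+1/n)^\delta$-type expansions show the $b$-dependence enters linearly in the first correction and hence non-trivially in $T_{i,j}$. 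Granting these two points, the rest is the routine asymptotic algebra already rehearsed in the proof of Theorem \ref{fuchs}, and the correspondence between surviving triples and local types of solutions is immediate.
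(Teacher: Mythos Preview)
The paper does not supply a formal proof block after Theorem~\ref{theolocaltype}; the justification is the discussion immediately preceding Algorithm~\ref{localtype} (the paragraphs ``Computing $\nu$'', ``Computing $c$'', ``Computing $b$''), which derives exactly the necessary conditions $(\ref{53})$--$(\ref{54})$, $(\ref{eq55})$, and the vanishing of the leading coefficient after substituting $(\ref{eq56})$. Your ``only if'' argument recapitulates that discussion faithfully, and this is precisely what the theorem asserts: that no local type of an actual hypergeometric term solution is missed. So on the essential point your proposal matches the paper's approach.

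Two remarks. First, your ``if'' paragraph is both unnecessary and internally inconsistent: you begin by claiming every output triple is the local type of some solution, then retreat to saying only necessity matters. The theorem as stated is one-directional (completeness of $L$), and indeed the algorithm may well output spurious triples that are filtered out only later in Algorithm~\ref{myhyper}; you should simply drop that paragraph. Second, your worry that $T_{i,j}(b)$ might be vacuous is already handled by the algorithm's wording ``first non-zero term'': by construction $T_{i,j}$ is not identically zero, and since the expansion involves only polynomial operations in $b$ it is a genuine polynomial; if it happens to be a non-zero constant, $S_{b,i,j}=\emptyset$ and no triple is recorded, which is consistent with completeness. Your claim that $b$ enters \emph{linearly} is not needed and not always true---higher-order cancellations can push the first $b$-dependence to higher degree---but this does not affect the argument.
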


\begin{remark}\item
	\begin{itemize}
		\item When extension fields are allowed, Algorithm $\ref{localtype}$ is used to bound the degree of such extensions. Indeed, the computation of $c$ and $b$ for local types defines the degree bound of extension fields to carry throughout the remaining part of the algorithm. This is sometimes important to avoid computations with algebraic numbers whose degrees are larger. This is basically how our algorithm avoid splitting fields from the polynomials of degree $11$ in the trailing and leading coefficients of $(RE)$. More theoretical details on dealing with algebraic extensions can be seen in \cite[Section 8]{cluzeau2006computing}. Regarding implementations, this step is better handled with Maple than Maxima: by its command \textit{RootOf}, Maple is able to manage computations with non-explicit values of algebraic numbers; in Maxima however, such a command is not yet available and so solutions over some algebraic extension fields might be missed with our implementation. Nevertheless, this is not to pretend that our Maple implementation could work on every extension fields, because some computations remains not easily handled. We just want to point out implementation limits that are well considered theoretically. This might be the reason why Maple current \textit{LREtools[hypergeomsols]} seems to be an algebraic-numbers based implementation; we will present examples where \textit{LREtools[hypergeomsols]} gets correct results only when polynomial coefficients are converted to symbolic algebraic numbers using \textit{RootOf}.
		
		\item A further notice about extension fields is a property similar to the conjugate root theorem (see \cite[Lemma 3]{cluzeau2006computing}). Indeed, when a local type $(\nu,c,b)$ where $c$ is defined over an extension field leads to a basis of hypergeometric term solutions, say $B_c$, then local types corresponding to conjugates of $c$ lead to bases of same dimensions as $B_c$ that, only differ from $B_c$ by conjugations of $c$. Therefore an important efficiency can be gained by using this property. At the current time, this is not considered in our implementations but will certainly be the case in future releases.
		
		\item We mention that computations of Algorithm $\ref{localtype}$ can sometimes be used to reduce the number of iterations in the implementation. The important point to notice is that when two linearly independent hypergeometric term solutions have the same local type, Algorithm $\ref{localtype}$ computes it at least twice. Therefore collecting local types in a list might be advantageous, so that when a basis of hypergeometric terms corresponding to a particular local type is found, the latter is discarded from the list of local types. This is a useful tool when the number of computed local types (with repeated values) is less than the order of the given holonomic recurrence equation. Otherwise there is no reason to save local types in a list.
	\end{itemize}
\end{remark}

Thus any ratio candidates whose local type is not in the list of local types (deprived of values for $c$) should not be used in further steps. We implemented a Maxima function \textit{localtype(L,n)} which takes the polynomial coefficients of a holonomic recurrence equation in \textit{L} in the variable \textit{n} and returns a list of triples $[\nu,c,b]$. Applying it for $(RE)$ yields

\noindent
\begin{minipage}[t]{4.000000em}\color{red}\bfseries
	(\% i1)
\end{minipage}
\begin{minipage}[t]{\textwidth}\color{blue}
	localtype(expand(REcoeff(RE,a[n])),n);
\end{minipage}
\[\displaystyle \tag{\% o1} 
[[-2\operatorname{,}-1\operatorname{,}-1]\operatorname{,}[-1\operatorname{,}1\operatorname{,}-1]\operatorname{,}[0\operatorname{,}-1\operatorname{,}-1]\operatorname{,}[0\operatorname{,}1\operatorname{,}-1]].\mbox{}
\]
\textit{REcoeff} is our code to collect coefficients. These are expanded using the Maxima command \textit{expand}. From the obtained output it follows that $1/(n+1)^3$ and $(n+1)$ should also be removed from potential ratios of hypergeometric term solution Pochhammer parts in $(\ref{eq38})$. Note, however, that it is from the computed local types that we get the possible values for $C$ in $(\ref{eq35})$ according to the third relation in Theorem \ref{fuchs}. These will be used in the next step together with their corresponding Pochhammer part ratios.

\subsection{Rational part of hypergeometric terms}

The algorithm goes further in filtering the set of Pochhammer part ratios. Indeed, once we have found all those better candidates for ratios of hypergeometric term solution Pochhammer parts, we need to use again the second Fuchs relation from Theorem \ref{fuchs} in order to find $\delta = \deg(N(n)) - \deg(U(n))$, where $N(n)$ and $U(n)$ are the numerator and the denominator of $R$ in $(\ref{eq36})$. In fact, since we have found values for $b$ and its possible ratio candidates, which means that we can compute $\sum_{k=1}^{K}\theta_k\cdot e_k$, we therefore deduce that these candidates are valid if and only if they satisfy
\begin{equation}
	\delta = b - \sum_{k=1}^{K}\theta_k\cdot e_k \in \mathbb{Z}. \label{eq61}
\end{equation}
 In this case the verification of ratios of Pochhammer parts of hypergeometric term solutions for the value of $b$ should consist in checking if the difference $b$ $-$ $\sum_{k=1}^{K}\theta_k\cdot e_k$ is an integer.

However, $(\ref{eq61})$ can be used in the algorithm only if $b$ is not computed modulo $\mathbb{Z}$. Another approach is to use again asymptotic expansion. Since now we have the ratios with their corresponding values of $c$, according to $(\ref{eq36})$ we can write
\begin{equation}
	r(n) = \dfrac{R(n+1)}{R(n)}\cdot c \cdot \frac{h(n+1)}{h(n)}. \label{eq62}
\end{equation}
Moreover
\begin{equation}
	\dfrac{R(n+1)}{R(n)} = 1 + \dfrac{\delta}{n} + O\left(\dfrac{1}{n^2}\right), \label{eq63}
\end{equation}
where $\delta$ is as in $(\ref{eq61})$. Thus the asymptotic expansion of $(\ref{eq48})$ (left-hand side) with $r(n)$ used by combining $(\ref{eq62})$ and $(\ref{eq63})$ must have a leading term as a polynomial coefficient in the variable $\delta$. Therefore values of $\delta$ are integer roots (if there are some) of that polynomial. If there are not such roots, then the rational function $c \cdot h(n+1)/h(n)$ is removed from the potential hypergeometric term solution Pochhammer parts of $(\ref{eq29})$. Our implementation uses this second approach.

Mostly after this step the number of Pochhammer part ratios of hypergeometric term solution of $(\ref{eq29})$ is considerably reduced or equal to the exact number of hypergeometric term solutions. 

Now, it only remains to find the rational function $R$ in $(\ref{eq35})$ whose a holonomic recurrence equation can be easily computed. Let $c\cdot h(n+1)/h(n)$ be one of the remaining ratios times its corresponding $c$ for the local type. Then the recurrence equation
\begin{equation}
	\sum_{i=0}^{d}P_i\cdot R(n+i)\cdot c^{n+i}\cdot \frac{h(n+1+i)}{h(n+i)}=0, \label{eq64}
\end{equation}
is an equation for the unknown rational function $R(n)$ that we can easily modify to a holonomic recurrence equation. Note, however, that there is no need to use a complete algorithm for computing rational solutions of holonomic recurrence equations. Indeed, since we already have the difference between the degrees of the numerators and the denominators of rational solutions of $(\ref{eq64})$, it is enough to use an algorithm that computes a universal denominator\footnote{A universal denominator of rational solutions of a holonomic recurrence equation is a polynomial that is divisible by all the denominators of rational solutions of that holonomic equation \cite{abramov1999rational}.} $U(n)$ of all rational solutions of $(\ref{eq64})$, and use $\delta$ or its maximum value (for the second approach we proposed) (see $(\ref{eq61})$) to compute a degree bound $\delta + \deg(U(n))$ for the degrees of the corresponding numerators. Substituting $N(n)/U(n)$ in $(\ref{eq64})$ where $N(n)$ is an arbitrary polynomial of degree $\delta + \deg(U(n))$ results in a linear system in the coefficients of the arbitrary polynomial $N(n)$. Finally solving that system gives a basis of all the rational functions $R(n)=N(n)/U(n)$ sought.

Let us then say a few words on the computation of a universal denominator of rational solutions of holonomic recurrence equations. Abramov has proposed most key results for that purpose (see \cite{abramov1998rational, abramov1999rational, abramov2011rational}). A crucial step in Abramov's original algorithm is to compute the dispersion set of two polynomials\footnote{The dispersion set of $A(n)$ and $B(n)$ can be defined as the set of all non-negative integer roots of the resultant polynomial of $A(n)$ and $B(n+h)$ in the variable $h$.}. The dispersion set can efficiently be obtained from full factorization as described in \cite{man1994fast} (see also \cite[Algorithm 5.2]{WolfBook}). We use this method in our implementation of Abramov's algorithm to compute universal denominators. As a little story, note that we could not find neither in Maxima, nor in Maple a satisfactory (in terms of efficiency) implementation for computing dispersion sets. Therefore we implemented the algorithm in \cite{man1994fast} and added it as a by product of our Maxima and Maple packages.

We think this last step of computing the rational function $R(n)$ might sometimes make a difference of efficiency between our algorithm and van Hoeij's original version. In his approach, van Hoeij uses a special algorithm from his idea of finite singularities to determine $R(n)$ (see \cite{van1998rational}). Though we mentioned that a complete algorithm for that purpose is not necessary, the algorithm in \cite{van1998rational} is sometimes suitable with the computations in \cite{van1999finite}. However, comparisons in \cite{abramov2011rational} shows that using our approach or van Hoeij's one at this step does guarantee efficiency gain of one over the other.

\subsection{Our algorithm}

We can now present the complete algorithm of this paper.

\begin{algorithm}[h!]
	\caption{Compute hypergeometric term solutions of $(\ref{eq29})$}\label{myhyper}
	\begin{algorithmic}[2]
		\Require  Polynomials
		\begin{equation*}
			P_i(n)	\in \mathbb{K}(n), i=0,\ldots,d, \mid P_d(n)\cdot P_0(n)\neq 0.\label{eq65}
		\end{equation*}
		\Ensure A basis of all hypergeometric term solutions of the holonomic recurrence equation 
		\begin{equation}
			\sum_{i=0}^{d} P_i(n) a_{n+i}=0 \label{eq66}
		\end{equation}    	
		over $\mathbb{K}$.	    		
		\begin{enumerate}
			\item Set $H=\{\}$.   		
			\item Use Algorithm $\ref{localtype}$ to compute the set $\mathcal{L}$ of all local types at infinity of hypergeometric term solutions of $(\ref{eq66})$.
			\item If $\mathcal{L}=\emptyset$, then stop and return $H$.
		\end{enumerate}
		\algstore{pause3}
	\end{algorithmic}
\end{algorithm}
\clearpage 

\begin{algorithm}[h!]
	\ContinuedFloat
	\caption{Compute hypergeometric term solutions of $(\ref{eq29})$}
	\begin{algorithmic}[3]
		\algrestore{pause3}	
		\State     		
		\begin{itemize}	
			\item[(4)] \label{step4} Construct the set of couple (numerator, denominator)
			\begin{multline}
				\mathcal{P}:= \bigg\lbrace \left(p(n),q(n)\right)\in\mathbb{K}[n]^2~:~ p(n) \text{ and } q(n) \text{ 		are monic factors modulo } \mathbb{Z} \\
				\text{ with roots real parts in } [-1,0)\text{ of } P_0(n-1) \text{ and } P_d(n-d) \text{ respectively} 	\bigg\rbrace,
			\end{multline} 
			for ratio candidates of hypergeometric term solution Pochhammer parts.
			\item[(5)] \label{step5} Remove from $\mathcal{P}$ all couple whose $p(n)$ exponents are less than the minimum multiplicity of the corresponding root modulo $\mathbb{Z}$ in the trailing polynomial coefficient $P_0(n)$. Similarly, clear $\mathcal{P}$ by the same consideration for $q(n)$ exponents and the leading polynomial coefficient $P_d(n)$. Finally substitute each remaining couple $\left(p(n),q(n)\right)$ in $\mathcal{P}$ by $\frac{p(n)}{q(n)}$. 
			\item[(6)] Remove elements in $\mathcal{P}$ that have a larger algebraic degree than the bound given by the local types in $\mathcal{L}$.
			\item[(7)] Construct the set $F_1$ of $c \cdot r$, $r\in \mathcal{P}$ such that $c\cdot r$ has its local type at infinity as an element of $\mathcal{L}$.
				\begin{equation}
					F_1 := \bigg\lbrace c\cdot r ~:~ r = n^{\nu_r}\left(1+\dfrac{b_r}{n} + 	O\left(\dfrac{1}{n^2}\right)\right) \in \mathcal{P} \text{ and } (\nu_r,c, b_n)\in \mathcal{L} \bigg\rbrace.
				\end{equation}
			\item[(8)] Set $F_2:=\{\}$. For each element $f(n)$ of $F_1$  		
			\begin{itemize}
				\item[(a)] Compute a recurrence equation, say $E_{f}$ with the coefficients 
				\begin{equation}
					P_i \cdot \prod_{j=0}^{i}f(n+i),~i=0,\ldots,d,
				\end{equation}
				for the rational function $R(n)$ in $(\ref{eq36})$ of the possible hypergeometric term solutions.
				\item[(b)] Substitute the terms $R(n+i)$ by $(1+\frac{\delta}{n+i})$, $i=0,\ldots,d,$ in $E_f$ and compute the coefficient of the leading term of the asymptotic expansion of the left hand side of $E_f$, say $Q_f(\delta)$.
				\item[(c)] Compute the set $S_{\delta_f}$ of integer roots of $Q_f(\delta)$.
				\item[(d)] If $S_{\delta_f} = \emptyset$ then $f(n)$ is discarded.
				\item[(e)] Else set $\delta_f := \max(S_f)$, rewrite $E_f$ in a holonomic form and add $(f(n), \delta_f,E_f)$ in $F_2$.
			\end{itemize}	    	
			\item[(9)] If $F_2=\emptyset$ then stop and return $H$.
			\item[(10)] For each $(f(n), \delta_f,E_f) \in F_2$
			\begin{itemize}
				\item[(a)] Compute the universal denominator $U_f(n)$ of rational solutions of $E_f$ by using the approach in \cite{man1994fast} to find the needed dispersion set.
				\item[(b)] Update $E_f$ as $E_f'$ with $U_f(n)$ to get a holonomic recurrence for numerators of rational solutions of $E_f$. 
				\item[(c)] Set $d_{N_f}:= \deg(U_f(n)) + \delta_f$, and find a basis of all polynomial solutions of degree at most $d_{N_f}$ of $E_f'$.
				\item[(d)] Use Algorithm $\ref{pochfactorsimp}$ to compute $h_f(n) = \prod_{k=0}^{n-1}f(k)$.
				\item[(e)] For each $N_f(n)\in S_{N_f}$ add $\frac{N_f(n)}{U_f(n)}\cdot h_f(n)$ to $H$.
			\end{itemize}	    
			\item[(11)] Return $H$
		\end{itemize}
	\end{algorithmic}
\end{algorithm} 

We implemented Algorithm $\ref{myhyper}$ in Maxima as \textit{HypervanHoeij(RE,a[n],[K])}, with the default value $Q$ (for rationals\footnote{Rationally valued, not symbolically rational: a parameter declared as rational is not considered as such in the implementation.}) for $K$ representing the field where solutions are computed. One must specify $C$ for $K$ to allow computations over extension fields of $\mathbb{Q}$. Using this implementation to solve $(RE)$ yields

\noindent
\begin{minipage}[t]{4.000000em}\color{red}\bfseries
	(\% i1)
\end{minipage}
\begin{minipage}[t]{\textwidth}\color{blue}
	HypervanHoeij(RE,a[n]);
\end{minipage}
\[\displaystyle \text{Evaluation took } 0.2970 \text{ seconds } (0.3020 \text{ elapsed}) \text{ using } 96.149 \text{MB}.\mbox{}\]
\vspace{-0.5cm}

\[\tag{\% o1} 
\left\{\left( n+1\right) \, \left( n+2\right) \, \left( n+3\right) \operatorname{,}\frac{{{\left( -1\right) }^{n}}}{n}\operatorname{,}\frac{1}{n\operatorname{!}}\operatorname{,}\frac{{{\left( -1\right) }^{n}}\, {{4}^{2 n}}\, {{n\operatorname{!}}^{2}}}{{{\left( 2 n\right) \operatorname{!}}^{2}}}\right\}\mbox{},
\]
with timing (allowed with the Maxima command \textit{showtime}) $0.2970$ second and $96.149$ MB memory used. In Maple, we implemented our algorithm as \textit{rectohyperterm} in our package \textit{FPS}\footnote{Denoting formal power series (FPS), an implementation of main algorithms from \cite{BTphd}.}. Le us do the same computation with our Maple implementation and \textit{LREtools[hypergeomsols]}.

\begin{maplegroup}
	\begin{mapleinput}
		\mapleinline{active}{1d}{RE:=FPS[sumhyperRE]([binomial(n+3,n),1/n!,
		(-1)\symbol{94}n/n,(-1)\symbol{94}n/pochhammer(1/2,n)\symbol{94}2],a(n)):}{}
	\end{mapleinput}
\end{maplegroup}
\begin{maplegroup}
	\begin{mapleinput}
		\mapleinline{active}{1d}{Usage(FPS[rectohyperterm](RE,a(n)))
		}{}
	\end{mapleinput}
	\mapleresult
	memory used=15.94MiB, alloc change=4.00MiB, cpu time=141.00ms, real time=135.00ms, gc time=0ns
	\mapleresult
	
	\begin{maplelatex}
		\[\displaystyle  \left\{  \left( n! \right) ^{-1},{\frac { \left( -1 \right) ^{n}}{n}}, \left( n+3 \right)  \left( n+2 \right)  \left( n+1 \right) ,{\frac { \left( -1 \right) ^{n} \left( n! \right) ^{2}\\
					\mbox{}{16}^{n}}{ \left(  \left( 2\,n \right) ! \right) ^{2}}} \right\} \]
	\end{maplelatex}
\end{maplegroup}
\begin{maplegroup}
	\begin{mapleinput}
		\mapleinline{active}{1d}{Usage(LREtools[hypergeomsols](RE,a(n),\{\},output=basis))
		}{}
	\end{mapleinput}
	\mapleresult
	memory used=21.48MiB, alloc change=25.99MiB, cpu time=343.00ms, real time=277.00ms, gc time=125.00ms
	\mapleresult
	
	\begin{maplelatex}
		\[\displaystyle \left[{n}^{3}+6\,{n}^{2}+11\,n+6,{\frac { \left( -1 \right) ^{n}}{n}}, \left( \Gamma \left( n+1 \right)  \right) ^{-1},{\frac { \left( -1 \right) ^{n}}{ \left( \Gamma \left( 1/2+n \right)  \right) ^{2\\
						\mbox{}}}}\right]\]
	\end{maplelatex}
\end{maplegroup}

The \textit{Usage} command from the \textit{CodeTools} package is used to display timings and memory used. Here one can see the advantage of having an implementation that uses extension fields from user specifications. Allowing extension fields for this example unnecessarily increases the timing (will be the same as for \textit{LREtools[hypergeomsols]}) of computations since $(RE)$ is of order $4$ and we already have $4$ hypergeometric term solutions in the output.

\section{Some comparisons}

Our Maple implementation of the given algorithm was tested on many recurrence equations. Regarding efficiency, the difference between our implementation and the internal Maple 2020 \textit{LREtools[hypergeomsols]} is in the order of milliseconds: for solutions over the rationals, our implementation generally gives a better efficiency; and for solutions over extension fields \textit{LREtools[hypergeomsols]} is generally faster, although the timing are very closed. Nevertheless, there are examples where this comparison does not hold but the timings remains closer. Therefore it is more suitable to say that both implementations have same or similar efficiency, and that it may differ depending on internal commands used. For example, one particular internal computations that slows down our Maple implementation in certain cases is the asymptotic expansion of Algorithm \ref{localtype}, for which Maple commands \textit{series} or \textit{asympt} does not always yield an expansion of the required order. For this reason we repeat the computation (of course, by increasing the order) until the desired order is obtained. Furthermore, to facilitate the steps which follow, the coefficients obtained must often be simplified since at times they are equivalent to zero. This issue does not occur with our Maxima implementation although it usually comes third when we compare efficiencies. But again, there are some examples where this is not verified, sometimes our Maxima code is the fastest; indeed, we think for comparisons involving implementations in Maple and Maxima or two different CAS in general, a first look must be taken at kernels and data structures of both systems. Without going into details on this computer science 'buildings' we only mention that Maple has a C-base kernel whereas Maxima has a Common Lisp-base one, and this could make some differences in the speed of both systems. The reader can visit the programming website \url{https://open.kattis.com} to see how often C programs are the fastest. The author particularly recommends to check the 0-1 Sequences problem.

On the other hand, we have been able to find bugs from Maple 2020 internal command. We give here three unexpected examples where our implementation finds results highlighting the issues encountered.

\begin{maplegroup}
	\begin{mapleinput}
		\mapleinline{active}{1d}{RE1:=FPS[sumhyperRE]([(1-sqrt(7))\symbol{94}n/n!,
			(1+sqrt(7))\symbol{94}n/(n+1)!],a(n))
		}{}
	\end{mapleinput}
	\mapleresult
	\begin{maplelatex}
		\begin{multline*}
			{\it RE1}\, := \, \left( 6\, \sqrt{7}-84\,n-210 \right) a \left( n \right) +4\, \left( n+2 \right)  \left( -7\,n+4\, \sqrt{7}-14 \right) a \left( n+1 \right) \\
			\mbox{}- \left( n+2 \right)  \left( n+3 \right)  \left(  \sqrt{7}-14\,n-21 \right) a \left( n+2 \right) =0
		\end{multline*}
	\end{maplelatex}
\end{maplegroup}
\begin{maplegroup}
	\begin{mapleinput}
		\mapleinline{active}{1d}{LREtools[hypergeomsols](RE1,a(n),\{\},output=basis)
		}{}
	\end{mapleinput}
	\mapleresult
	\begin{maplelatex}
		\[\displaystyle 0\]
	\end{maplelatex}
\end{maplegroup}
which means no solution found;
\begin{maplegroup}
	\begin{mapleinput}
		\mapleinline{active}{1d}{RE2:=FPS[sumhyperRE]([3\symbol{94}(n/5),3\symbol{94}(n/2)],a(n))
		}{}
	\end{mapleinput}
	\mapleresult
	\begin{maplelatex}
		\[\displaystyle {\it RE2}\, := \,3\,a \left( n \right) + \left( -{3}^{4/5}- \sqrt{3} \right) a \left( n+1 \right) +{3}^{3/10}a \left( n+2 \right) =0\]
	\end{maplelatex}
\end{maplegroup}
\begin{maplegroup}
	\begin{mapleinput}
		\mapleinline{active}{1d}{LREtools[hypergeomsols](RE2,a(n),\{\},output=basis)
		}{}
	\end{mapleinput}
	\mapleresult
	\underline{Error, (in mod/Primfield/ReduceField/sort\_B) too many levels of recursion}.
\end{maplegroup}
\vspace{0.25cm}

However, the above issues can be avoided by rewriting algebraic numbers in their Maple standard representation using \textit{convert/RootOf}. This is a temporary workaround proposed by Mark van Heoij, the bug will be corrected in future Maple releases. Hence the reason why we mentioned that \textit{LREtools[hypergeomsols]} is an algebraic-number based implementation. Using our implementation, one gets the expected results 

\begin{maplegroup}
	\begin{mapleinput}
		\mapleinline{active}{1d}{FPS[rectohyperterm](RE1,a(n),C)
		}{}
	\end{mapleinput}
	\mapleresult
	\begin{maplelatex}
		\[\displaystyle  \left\{ {\frac { \left( 1- \sqrt{7} \right) ^{n}}{n!}},{\frac { \left( 1+ \sqrt{7} \right) ^{n}}{ \left( n+1 \right) n!}} \right\} \]
	\end{maplelatex}
\end{maplegroup}
\begin{maplegroup}
	\begin{mapleinput}
		\mapleinline{active}{1d}{FPS[rectohyperterm](RE2,a(n),C)
		}{}
	\end{mapleinput}
	\mapleresult
	\begin{maplelatex}
		\[\displaystyle  \left\{  \left( \sqrt [5]{3} \right) ^{n}, \left(  \sqrt{3} \right) ^{n} \right\}. \]
	\end{maplelatex}
\end{maplegroup}

It is important that the hypergeometric property is kept in implementations so that computations extend to symbolic use of expressions that may not represent numbers. The next example type might be of interest for sequence of functions.

\begin{maplegroup}
	\begin{mapleinput}
		\mapleinline{active}{1d}{RE3:=FPS[sumhyperRE]([ln(x)\symbol{94}n,ln(x*y)\symbol{94}n],a(n))
		}{}
	\end{mapleinput}
	\mapleresult
	\begin{maplelatex}
	\[\displaystyle {\it RE3}\, := \,a \left( n \right) \ln  \left( x \right) \ln  \left( xy \right) + \left( -\ln  \left( x \right) -\ln  \left( xy \right)  \right) a \left( n+1 \right) +a \left( n+2 \right) \\
			\mbox{}=0\]
	\end{maplelatex}
\end{maplegroup}
\begin{maplegroup}
	\begin{mapleinput}
		\mapleinline{active}{1d}{LREtools[hypergeomsols](RE3,a(n),\{\},output=basis)
		}{}
	\end{mapleinput}
	\mapleresult
	\underline{Error, (in mod/Normal/Factored) not implemented}
\end{maplegroup}

\begin{maplegroup}
	\begin{mapleinput}
		\mapleinline{active}{1d}{FPS[rectohyperterm](RE3,a(n),C)
		}{}
	\end{mapleinput}
	\mapleresult
	\begin{maplelatex}
		\[\displaystyle  \left\{  \left( \ln  \left( x \right)  \right) ^{n}, \left( \ln  \left( xy \right)  \right) ^{n} \right\} \]
	\end{maplelatex}
\end{maplegroup}

As with the two first examples, we believe the latter bug can be fixed as well. The point here is to show how having our algorithm, a variant of van Hoeij's algorithm, contributes to ensure and present equivalences of theoretical arguments in \cite{petkovvsek1992hypergeometric,van1999finite, cluzeau2006computing}, and improve cutting edge implementations. Although van Hoeij's algorithm is mentioned at \textit{this footnote link} \footnote{\url{https://reference.wolfram.com/language/tutorial/SomeNotesOnInternalImplementation.html}} to a Mathematica webpage, the implementation in \textit{Rsolve} remains quite slow: we computed the solutions of $(RE)$ and got a much complicated result after about $7$ minutes of computations. It is difficult to decide which algorithm is used as the code is hidden from users. Petkov\v{s}ek's algorithm is the most popular implementation encountered in many CASs. Our result and its implementation bring Maxima to the top level in computing hypergeometric term solutions of holonomic recurrence equations. A software demonstration was recently presented at the International Congress of Mathematical Software (ICMS) 2020. The algorithm of this paper can be used as an important black box for the general case of $m$-interlacings of hypergeometric sequences or $m$-fold hypergeometric terms, $m\in\mathbb{N}$ (see \cite{BTphd, BTWK}).

\bibliographystyle{amsplain}
\providecommand{\bysame}{\leavevmode\hbox to3em{\hrulefill}\thinspace}
\providecommand{\MR}{\relax\ifhmode\unskip\space\fi MR }
\providecommand{\MRhref}[2]{%
	\href{http://www.ams.org/mathscinet-getitem?mr=#1}{#2}
}
\providecommand{\href}[2]{#2}

\end{document}